%% file: JeonTongZhao_v8.tex
\documentclass[10pt,twocolumn,twoside]{IEEEtran}
\IEEEoverridecommandlockouts
\usepackage{cite}
\usepackage{amsmath,amssymb,amsfonts}
\usepackage{graphicx}
\usepackage{textcomp}
\usepackage{amsthm}
\usepackage{changepage}
\usepackage{array}
\usepackage{hyperref}
\usepackage{dsfont}
\usepackage{bm}
\usepackage{caption}
\usepackage{soul}
\usepackage{color}

\usepackage{subcaption}
\usepackage{enumitem}
\usepackage{booktabs, tabularx}
\usepackage{multirow}
\usepackage{algorithm}
\usepackage{algpseudocode}
\algrenewcommand\algorithmicrequire{\textbf{Input:}}
\algrenewcommand\algorithmicensure{\textbf{Output:}}
\usepackage{float}
\usepackage{subfiles}
\usepackage{verbatim}
\usepackage{bbm}

\newtheorem{theorem}{Theorem}
\newtheorem{definition}{Definition}

\usepackage{cuted}
\usepackage{stfloats}
\newtheorem{proposition}{Proposition}
\newtheorem{lemma}{Lemma}
\usepackage[font = small, justification = justified,format = plain]{caption}
\newcommand{\chitpr}{\chi_{\text{TPR}}}

\newcommand{\pip}{\pi^+}
\newcommand{\pim}{\pi^-}
\newcommand{\ceilval}[1]{\lceil #1 \rceil}

\def\etal{{\it et al. \/}}

\def\ie{{\it i.e.,\ \/}}

\algrenewcommand\algorithmicindent{0.5em}
\usepackage{xcolor}
\def\BibTeX{{\rm B\kern-.05em{\sc i\kern-.025em b}\kern-.08em
    T\kern-.1667em\lower.7ex\hbox{E}\kern-.125emX}}
\begin{document}

\title{Price-Based Distributed Scheduling of Flexible Demands in Energy Communities}
\author{Minjae~Jeon,~\IEEEmembership{Student Member,~IEEE,}
        Lang~Tong,~\IEEEmembership{Fellow,~IEEE,}
        Qing~Zhao,~\IEEEmembership{Fellow,~IEEE}
    \thanks{Minjae Jeon, Lang Tong,and Qing Zhao are with the School of Electrical and Computer Engineering, Cornell University, USA. This work was supported in part by the National Science Foundation under Grant 2412776, 2419622, and 2603293. }
    }

\maketitle
\begin{abstract}
We study price-based distributed scheduling of flexible demand in an energy community, where a coordinator broadcasts electricity prices and individual households schedule their consumption. Household demand includes deferrable and non-deferrable loads, such as electric vehicle charging with completion deadlines and thermostatically controlled loads. The coordinator transacts with a distribution utility on behalf of community members under the regulated Net Energy Metering tariff. We formulate distributed demand scheduling as a bilevel stochastic dynamic program. The upper level optimizes the coordinator's pricing policy to minimize the community's energy costs subject to operating, revenue-adequacy, and individual-rationality constraints. The lower level involves stochastic dynamic programs that maximize households’ consumption benefits subject to the availability of renewable generation. The computational cost of such a distributed stochastic dynamic program is prohibitive in general. By uncovering the structure of optimal centralized scheduling, we derive Threshold Pricing Rule (TPR)---a simple community pricing policy with linear computational costs for the upper- and lower-level optimizations. Being independent of parameters of the underlying stochastic dynamic program, TPR is robust against modeling uncertainties and is shown to guarantee revenue adequacy for the community and individual rationality for community members. As the community size grows, TPR is shown to be optimal asymptotically.
\end{abstract}
\begin{IEEEkeywords}
Distributed stochastic dynamic programming, bilevel dynamic program, deadline scheduling, principal-agent problem, energy community, net energy metering.
\end{IEEEkeywords}

\section{Introduction}
\subfile{sections/Introduction_v8.1}
\section{Problem Formulation} \label{sec:cscheduling}
\subfile{sections/ProblemFormulation_v8}

\section{Optimal Centralized Scheduling}\label{sec:censolution}
\subfile{sections/CenSolution_v8}

\section{Threshold Pricing Rule and Properties}\label{sec:tpr}
\subfile{sections/ThreshPricingRule_v8}

\section{Price-Elastic Non-Deferrable Demand} \label{sec:priceresponsivedemand}
\subfile{sections/PriceResponsive_v8}
\section{Numerical Simulations} \label{sec:numericalsimulation}
\subfile{sections/Numerical_sim_v8}

\section{Conclusions}
\subfile{sections/Conclusion_v8}
\bibliographystyle{IEEEtran}
\bibliography{JeonTongZhaoRef.bib}

\appendices
\section{Proof of Theorems}\label{sec:appendixA}
\subfile{sections/appendixA_proof_v7}

\section{Price-inelastic non-deferrable demand model}
\subfile{sections/appendixB_nondeferrable_load}

\end{document}

%% file: sections/Introduction_v8.1.tex
We study price-based distributed scheduling for networked energy subsystems, in which a coordinator broadcasts optimally designed prices, and each subsystem maximizes its consumption benefit subject to operating constraints. This hierarchical structure admits several closely related formulations: a bilevel optimization problem \cite{dempe2002foundations}, where the coordinator's pricing policy is optimized subject to subsystems' optimal response to coordinator's prices; a Stackelberg game \cite{bacsar1998dynamic}, in which the coordinator is the leader setting the price, and the subsystems are the followers; and a principal-agent problem \cite{laffont2002theory}, in which the principal (coordinator) optimizes organizational objectives, while the agents (subsystems) optimize individual objectives. The general forms of these formulations have been studied extensively. In practice, however, finding the optimal pricing policy for the coordinator and the optimal scheduling policy for the subsystems is challenging and often intractable.

\begin{figure}[t]
 \centering
 \includegraphics[width = 0.8\columnwidth]{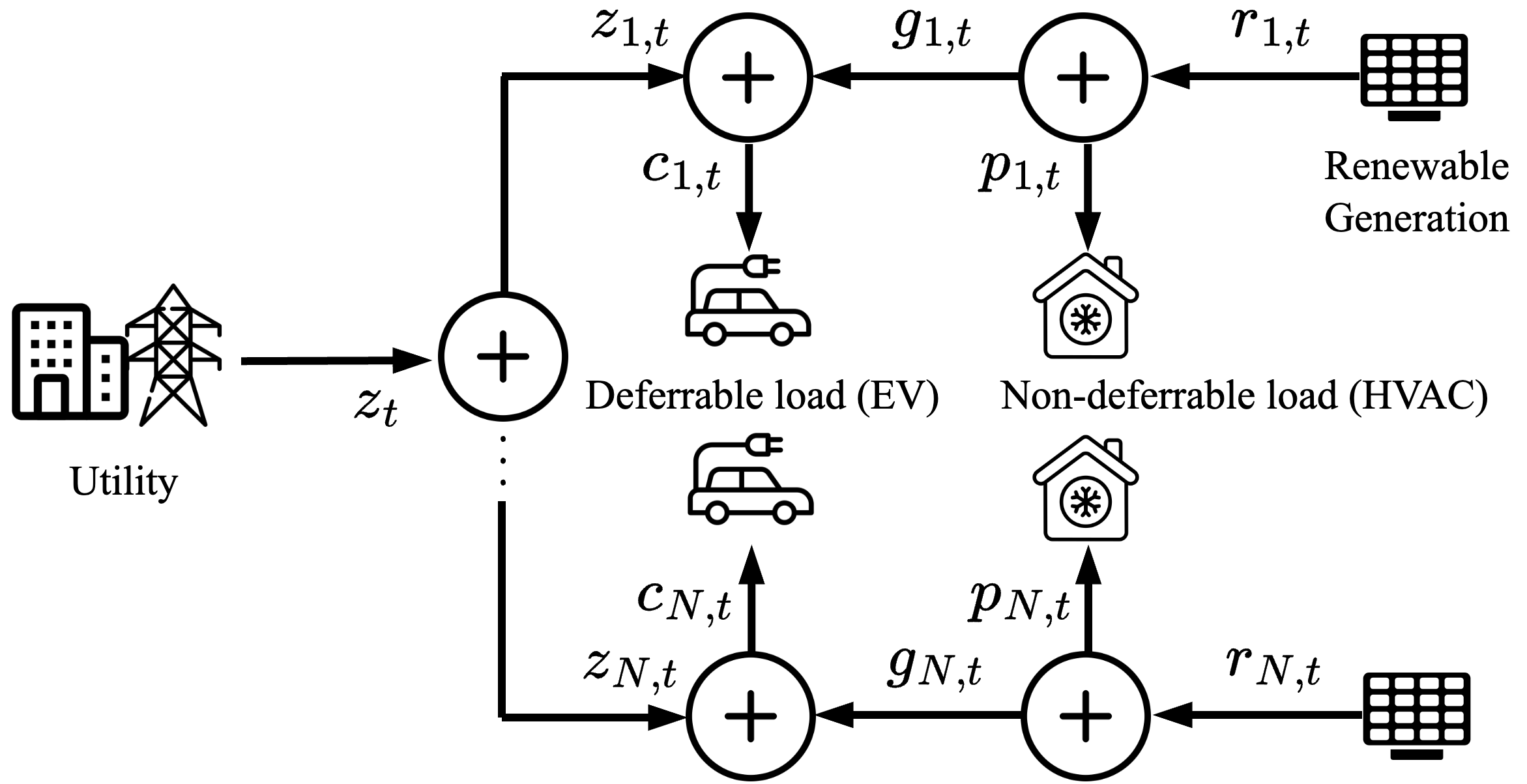}
 \caption{Model of energy community with  $N$ members, where each member has a renewable generation, a deferrable load (EV), and a non-deferrable load (HVAC)}
 \label{fig:energycommunitymodel}
\end{figure}

We consider the distributed scheduling problem in an energy community comprising networked energy subsystems (referred to as households), as shown in Fig.~\ref{fig:energycommunitymodel}. Unlike microgrids, an energy community interacts with its utility only financially. Most energy communities in the U.S. and Europe use a model in which the distribution utility manages the grid, and the community transacts with the utility through a coordinator (a.k.a. operator or aggregator) on behalf of community members under a regulated retail tariff.  The most widely adopted tariff is the Net Energy Metering (NEM): the community is charged at the \textit{retail rate} $\pip$ (\$/kWh) for its net-import and credited at the \textit{compensation rate} $\pim$ (\$/kWh) for its net-export. Typically, $\pip > \pim$, so the export rate is closer to the wholesale price.

While the community faces the utility's NEM tariff, each household faces {\em community prices} set by the coordinator.  The main theme of this work is to find the community welfare-maximizing pricing policy that also ensures  {\em individual rationality}, \ie the community price must be competitive with the utility's NEM tariff.  Otherwise, a member would abandon the community and become a default customer of the utility.  The tension between the coordinator's social objective and members' individual consumption-surplus maximization makes community pricing challenging.

Because individual rationality is essential to the community's stability, a natural question arises: what allows the coordinator, who itself faces the utility's NEM tariff, to offer its members better prices than NEM?

The intuition comes from NEM's price gap $(\pi^-,\pi^+)$ between the retail and compensation rates.  Outside the community, a household selling surplus renewable earns $\pim$ (\$/kWh) below the purchasing retail rate $\pip$ (\$/kWh). Within the community, surplus renewable energy from one household could have been sold to another household.  If the coordinator sets the buying and selling prices in $(\pim, \pip)$, both the selling and buying parties are better off inside rather than outside the community. Thus, sharing surplus renewable energy within the community is the key to competitive community pricing.

We frame the optimal distributed scheduling of flexible demands as optimal community pricing for distributed energy resources and flexible demands. We consider two types of flexible demand. One is deferrable demand with flexible completion time, such as EV charging; the other is non-deferrable flexible demand, possibly price-elastic, such as a thermostatically controlled load. The latter is especially relevant to distributed scheduling because households may have private preferences unknown to the coordinator.

By choice, our energy community model excludes peer-to-peer trading, which has been widely investigated in the literature. One reason to focus on the centralized pricing model is that it leads to simple implementation and strong analytical characterization of pricing properties; achieving the same with peer-to-peer mechanisms is challenging. Another justification is the Second Welfare Theorem. When the community is large and its market is approximately competitive, the equilibrium under the centralized welfare-maximizing price coincides with that of a decentralized (peer-to-peer) market equilibrium. Finally, pricing through a coordinator is widely adopted in practice, while decentralized peer-to-peer trading is rarely used.

\subsection{Related works}
The literature on distributed resource scheduling is vast. Here we restrict the review to those that (i) are applicable to networked energy systems as an energy community, and (ii) involve a coordinator that influences consumption through prices rather than direct scheduling.  Peer-to-peer mechanisms are not reviewed here.

The problem of meeting demand in an energy community has been studied extensively. See a recent review \cite{Barabino23Review} that highlights various energy community models.  The model considered here, as shown in Fig.~\ref{fig:energycommunitymodel}, is a prosumer model with a centralized local market defined in \cite{Barabino23Review}.
Other models approximating  Fig.~\ref{fig:energycommunitymodel} include
\cite{Han&Morstyn&McCulloch:19TPS,
chakraborty2018analysis,
mignoni2023distributed,
alahmed2024dynamic,
li2024decentralized}.

The use of prices as coordination signals for decentralized decision-making has a long history in economics, beginning with Walras’s formalization of competitive equilibrium through prices \cite{Walras:14}, Lange’s proposal that a central planner adjusts prices to implement competitive equilibrium \cite{Lange:36}, and Koopmans’s interpretation of prices as sufficient information for decentralized resource allocation \cite{Koopmans:57}. The present paper is more closely aligned with Arrow’s formulation of internal pricing within organizations \cite{Arrow:77} and Hurwicz’s mechanism design \cite{Hurwicz:73}. Here, the coordinator optimizes the community pricing policy to maximize community welfare while inducing individual optimal decisions consistent with the community welfare objective.

Within the engineering literature, Schweppe \etal made a seminal contribution by using electricity prices to achieve equilibrium in distribution systems \cite{Schweppe&etal:80TPAS}.  Conceptually, we follow the path outlined in \cite{Schweppe&etal:80TPAS}, allowing well-designed prices to guide demand and supply toward a socially optimal outcome.
 In networking technology, Low and Varaiya \cite{Low&Varaiya:93TN}, Kelly \cite{Kelly:97} and Low-Lapsley \cite{Low&Lapsley:99TN} formulated price-based decentralized resource allocation to ensure that individually optimized communicating sources align with the global social welfare objective.  Because the underlying optimization in our case is non-convex, we have to pursue a different path, abandoning the natural choice of using shadow prices from the dual.

The contemporary energy community literature includes  several lines of work on price-based scheduling of shared behind-the-meter (BTM) distributed energy resources (DERs). The most popular line of approach is based on
iterative interactions between the coordinator and individual community members \cite{gan2013optimal,gan2013real,ma2016efficient,Ma&Gupta&Topcu:17TAC,Li&etal:21TSE,mignoni2023distributed,davoudi2025non}, with coordinator sending tentative prices based on tentative scheduling profile and individuals report response to tentative prices until convergence.  The underlying optimizations are static optimization with demand and generation forecasts.  The present paper follows a separate line of non-iterative approaches: the coordinator sets the price by leveraging the structure of the retail tariff, and community members make individual consumption decisions \cite{chakraborty2018analysis,alahmed2024dynamic,li2024decentralized}, where the dynamic NEM \cite{alahmed2024dynamic} is the first to consider the three-zone pricing structure.



%

Missing from the literature are techniques that use price to coordinate distributed {\em dynamic programs}.  The most related is perhaps Hawkins's work on weakly coupled dynamic programs \cite{Hawkins:03Thesis}, where dual variables in the Lagrangian decomposition serve as the pricing signal. Unfortunately, Hawkins’s approach cannot be easily applied to the problem at hand when the state space is uncountable.

\subsection{Summary of results and contributions}
\subsubsection{Optimal Centralized Scheduling Policy Structure} We establish that the optimal centralized scheduling of deadline-constrained demand is a {\em Procrastination Policy} that delays, as much as possible, importing electricity from the grid.  This policy results in a two-threshold policy structure that partitions the axis of community-aggregated renewable energy into three scheduling zones (see Theorem~\ref{thrm1} in Sec.~\ref{sec:censolution}): (i) the {\em net-consuming zone} when the aggregated renewable energy falls below the lower threshold, (ii) the {\em net-producing zone} when the aggregated renewable energy exceeds the upper threshold, and (iii) the {\em net-zero zone} when the aggregated renewable energy is in between.
 Notably, the thresholds and scheduling decisions in the net-producing/consuming zones are given in closed form, thereby isolating the scheduling complexity to the net-zero zone.  This result is a non-trivial extension of earlier work  \cite{alahmed2022net,alahmed2024dynamic} to accommodate the lower-level stochastic dynamic programs.

\subsubsection{TPR: A Novel Community Pricing Policy}
Building on the threshold structure of the optimal centralized scheduling policy, we propose Threshold Pricing Rule (TPR) as the community pricing policy for each scheduling zone. TPR uses the NEM retail rate $\pip$ as the community price in the net-consuming zone, the NEM compensation rate $\pim$  in the net-producing zone, and  the NEM tariff in the net-zero zone. Because prices across all three zones are based only on $\{\pim,\pip\}$, computing TPR only requires trivial updates on the two zone thresholds. Especially significant is that TPR is independent of parameters of the upper level stochastic dynamic program, making TPR robust against model uncertainties.
Further, since $\pi^+$ and $\pi^-$ are known to all households, the TPR price broadcast in each interval requires at most two bits to encode the price change.  Given the community price broadcast by the coordinator, individual households' optimal scheduling can also be implemented in closed form.

\subsubsection{Revenue Adequacy, Individual Rationality, and Asymptotic Optimality} We establish TPR as a feasible, though not optimal, solution to the bilevel stochastic dynamic program.  Specifically, we prove that TPR is revenue adequate, which ensures the community coordinator's revenue is at least as large as its payments to the utility and to community members who exported surplus energy.  We show that  TPR satisfies {\em individual rationality}, which ensures each community member has no incentive to leave the community to become a default customer of the utility. Finally, we show that the TPR is an asymptotically optimal pricing rule under certain traffic conditions, as the community size grows.

\begin{table}[h]
		 \captionsetup{font=scriptsize }
		\caption{NOTATIONS FOR MAJOR VARIABLES}
		\vspace{-1ex}
		\centering
    	\begin{tabularx}{\linewidth}{l l}
			\toprule[0.4mm]
			Symbol & Descriptions
			\\
			\midrule
			$c_{i,t}$ & Charging amount of member $i$
			\\
			$\bar c$ & Maximum per-interval charging amount
			\\
			$d_{i,t}$ & Remaining energy demand of member $i$'s EV
			\\
			$g_{i,t}, \, g_t$ & Net renewable generation of member $i$ and community
			\\
			$m_{i,t}$ & Minimum required charging to meet deadline
			\\
			$M_{i,t}$ & Maximum feasible charging of member $i$'s EV
			\\
$[N]$ & The set of integers $\{1,\cdots, N\}$\\
			$\pmb \mu_i$ & Individual charging policy
			\\
			$\pmb \pi$ & NEM payment parameter
			\\
			$\pi^+, \, \pi^-$ & NEM consumption and compensation rates
			\\
			$\bm p_{i,t}$ & Non-deferrable demand consumption
			\\
			$P_{\pmb \pi}(\cdot)$ & Utility's NEM payment function
			\\
			$P_{\chi} (\cdot)$ & Community payment function
			\\
			$\pmb \psi_t$ & Community price
			\\
			$q(\cdot)$ & Non-completion penalty for EV charging
			\\
			$r_{i,t}, \, r_t$ & Renewable generation of member $i$ and community
			\\
			$\tau_{i,t}$ & Remaining intervals until deadline of member $i$
			\\
			$t$ & Control interval index
			\\
			$T$ & Scheduling horizon
			\\
			$\bar T$ & Maximum charging duration
			\\
			$\chi(\cdot)$ & Pricing rule of community coordinator
			\\
			$\pmb x_{i,t}, \, \pmb x_{t}$ & State of member $i$ and energy community
			\\
		    $z_{i,t}, \, z_t$ & Net consumption of member $i$ and community
			\\
$\mathbbm{1}(\cdot)$ & Indicator function, taking value one if the argument is true \\
& and zero otherwise\\
			\bottomrule[0.4mm]
	\end{tabularx}
 \label{tab:Notations}
\end{table}

%

%% file: sections/ProblemFormulation_v8.tex
This section presents a mathematical model for an energy community and introduces a bi-level stochastic dynamic program for the distributed scheduling problem.  We consider a finite scheduling horizon $T$, indexed by $t = 1, \ldots, T$. The community has $N$ households, indexed by $i = 1, \ldots, N$. With each control interval normalized to the unit length, demand and generation variables are in kilowatt-hours (kWhs).

Major symbols are in Table~\ref{tab:Notations}.  We use boldface lowercase letters, \textit{e.g}. $\pmb a = (a_1, \ldots, a_n)$, to denote column vectors, with 
 $\bm 1$ being the vector of all ones. \\

\subsection{Household demand and energy resources}
Each household has a renewable generator, a deferrable load, and non-deferrable loads as depicted in Fig.\ref{fig:energycommunitymodel}.  Hereafter, we use EV charging as a prototype of deferrable demand with completion deadlines.

	\subsubsection{Deferrable demand with deadline}
We assume that each household has a single EV charger that can charge one EV at a time. Once the EV charging is completed, a new EV demand may arrive according to an independent Bernoulli process with arrival probability $\alpha_{i,t}$.   

Upon arrival at the beginning of interval $t$, two quantities are revealed: the total charging demand $D_i$ and the maximum number of intervals $T_i \le T- t+1$ until deadline.  Let $\mathcal P_{i,t}$ denote the PMF of $(D_i, T_i)$. Arrivals are mutually independent across  households.
	
Let $d_{i,t}$ denote the remaining charging demand for household $i$'s EV at the beginning of interval $t$ and $\tau_{i,t}$ the number of intervals remaining until its deadline. When the charger is unoccupied, we set $(d_{i,t}, \tau_{i,t}) = (0,0)$.  With $\mathcal P_{i,t}(0,0)=1-\alpha_{i,t}$, the state evolution of household $i$'s EV charging  is
	\begin{equation} \label{eq2}
		(d_{i,t+1}, \tau_{i,t+1}) =
		\begin{cases}
			( D_i,  T_i) \sim \mathcal P_{i,t}(\cdot), & (d_{i,t}, \tau_{i,t}) = (0, 0),\\		
(d_{i,t}, \tau_{i,t}) - (c_{i,t}, 1), & \mathrm{otherwise.}
		\end{cases}
	\end{equation}

The charging amount $c_{i,t}$ in each interval is uniformly bounded\footnote{By normalizing each household's demand and per-interval charging amount by their charging capacity, we assume a uniform charging limit without loss of generality.} by the maximum charging amount $\bar c$ and the remaining demand: $c_{i,t} \in [0, \min\{ d_{i,t}, \bar c\}]$. Incompletion of charging is allowed, but any remaining demand $d$ at the deadline incurs a penalty $q(d)$, where  $q$  is strictly increasing and convex.

\subsubsection{Non-deferrable demand} A significant non-deferrable demand is conventional HVAC\footnote{Heating, ventilation, and air conditioning} operation, in which the unit tracks the household temperature and sets HVAC power consumption based on household preferences. Assuming a linear–quadratic (LQ) controller \cite{jia2016dynamic} under standard assumptions on the underlying random processes, the energy consumption admits a Markovian representation.

Let $\bm p_{it}$ be the vector of all non-deferrable demands. To formalize the overall stochastic dynamic optimization, we assume $\bm p_{it}$ is Markovian with transition kernel $h_{i,t}$, independent across $i$:
\begin{equation}
\bm p_{i,{t+1}} \sim h_{i,t}(\cdot \, | \, \bm p_{i,t}).
\end{equation}
Note that price-inelastic non-deferrable demands are must-serve (though random) demands, exogenous to optimizations at the community and individual household levels. The price-elastic demands, on the other hand, are endogenously set by the bilevel optimization.

\subsubsection{Renewable and net-renewable generation}
Household $i$'s renewable generation in interval $t$ is denoted by $r_{i,t}$, assumed to be non-negative and known at the beginning of interval $t$ and modeled as a discrete-time Markov process with uncountable (real) state and state transition distribution $f_{i,t}$:
\begin{equation}
r_{i,t+1} \sim f_{i,t}(\cdot \,| \, r_{i,t}),~~r_t:= \sum_{i \in [N]} r_{i,t},
\end{equation}
where $r_t$ is the community-aggregated renewable. Renewable generations across households are independent.
Combining the exogenous generation and non-deferrable demand processes, we define {\em net-renewable generation} $g_{i,t}$ and its aggregate $g_t$:
	\[
	g_{i,t} := r_{i,t} - {\bm 1}^\top{\bm p}_{i,t},~~g_t:= \sum_{i \in [N]} g_{i,t}.
	\]
\subsection{States of household and community, pricing, and costs}
\subsubsection{States}	Household $i$'s state $\pmb x_{i,t} := (d_{i,t}, \tau_{i,t}, r_{i,t}, \bm p_{i,t})$  includes remaining demand and intervals until deadline, renewable generation, and non-deferrable demand. The community state is denoted as $\pmb x_t := (\pmb x_{1,t}, \ldots, \pmb x_{N,t})$.
	
\subsubsection{Net consumption}	The net household consumption $z_{i,t}$  and net community consumption $z_{t}$ are defined as
	\begin{equation}
		z_{i,t}:= c_{i,t}-g_{i,t};~~z_{ t}:= \sum_{i\in [N]}  z_{i,t}.
	\end{equation}
The household is net-consuming if $z_{i,t}>0$ and {\it net-producing} if $z_{i,t}<0$. The community net-consuming (or net-producing) is similarly defined on $z_t$.

\subsubsection{Community costs under NEM tariff}	
The energy community transacts as a single customer of the utility under the NEM tariff.  In interval $t$, the community cost defined by the NEM payment function $P_{\pmb \pi}$ 
on the community net-consumption $z_t$, parameterized by $\pmb \pi := (\pi^+, \pi^-)$:
\begin{equation} \label{eq:NEMpayment}
	P_{\pmb \pi}(z_t) = \Big[\mathbbm{1}(z_t > 0) \pi^+ + \mathbbm{1}(z_t \le 0) \pi^-\Big]\,z_t,
\end{equation}
where we assume $\pip>\pim$.
	
\subsubsection{Community pricing and household consumption costs}
Within the community, each household's payment is determined by a uniform, time-varying payment function $P_\chi$ applied to its net consumption $z_{i,t}$. The function is parameterized by $\pmb \psi_t$, which the coordinator sets via a pricing policy $\chi: \bm x_t \mapsto \pmb \psi_t$ based on the community state $\bm x_t$.   Knowing the pricing function $P_\chi$ and the broadcast parameter $\pmb \psi_t$, the household schedules its consumption. The cost (or credit) of the net consumption $z_{i,t}$ is given by $P_\chi(z_{i,t};\pmb \psi_t)$.

We assume that the community payment function $P_\chi$  is a monotone increasing function of $z_{i,t}$ with $P_{\chi}(0; \pmb \psi_t) = 0$. Therefore, $P_{\chi}(z_{i,t} ; \pmb \psi_t) < 0$ for $z_{i,t} < 0$, which indicates that the household receives a payment for net production.

\subsection{Distributed scheduling as bilevel optimization}
We now formulate a bilevel stochastic optimization with the upper level for the coordinator to set the community pricing policy $\chi$ and the lower level for the household to follow with the distributed demand scheduling.

\subsubsection{Lower level: Individual Cost Minimization}
For a fixed pricing policy $\chi$, the broadcast prices $\{\pmb \psi_t^{\chi}\}_{t=1}^T$ form a stochastic process whose distribution is induced by $\chi$ and the dynamics of the community state $\pmb x_t$. We assume that households are price takers: household $i$ treats the broadcast price process $\{\pmb \psi_t^\chi\}_{t=1}^T$ as exogenous. Under this assumption, household $i$'s charging cost minimization problem becomes a single-agent stochastic dynamic optimization formulated as:
\begin{equation}\label{eq:householdprob}
	\begin{aligned}
		\min_{ \pmb {\mu}_i  }
		& \quad \mathbb E
		\left[
		\sum_{t=1}^T P_{\chi} (z_{i,t}; \pmb{\psi}_t^\chi) + \mathbbm 1(\tau_{i,t} =1) q(d_{i,t} - c_{i,t})
		\right] \\
		\text{s.t.}
		& \quad (1)-(4), \; c_{i,t} =  \mu_{i,t}   (\pmb x_{i,t}, \pmb \psi_t^\chi),  \; \forall t.
	\end{aligned}
\end{equation}
We denote the optimal policy and optimal cost induced by pricing policy $\chi$ as $\pmb \mu_i^{\chi}$ and $\mathcal C_i^{\chi}$, respectively.

\subsubsection{Community pricing  constraints}
 Given households' responses to the pricing policy, the coordinator's community cost minimization problem is formulated as a stochastic dynamic optimization problem that determines the pricing policy $\chi:  \pmb x_t \mapsto \pmb \psi_t$ subject to two pricing constraints.
 
The first constraint is  {\em Revenue Adequacy}, which ensures that the community is not in deficit in each scheduling interval, \ie the net payment to the utility is no greater than the net revenue collected internally from community members. 
\begin{definition}[Revenue Adequacy] A community pricing policy $\chi : \pmb x_t \mapsto \pmb \psi_t $ is revenue adequate if
	\begin{equation} \label{eq:revenueadequacy}
		P_{\pmb \pi} \bigg( \sum_{i \in [N]} z^\chi_{i,t} \bigg) \le\sum_{i \in [N]} P_{\chi}(z^\chi_{i,t};\pmb \psi_t ), \quad \forall t,
	\end{equation}
where $z^\chi_{i,t}$ is $\chi$-induced net-consumption of $i$ in interval $t$.
\end{definition}

The second constraint is {\em Individual Rationality}, which prevents the community members from leaving the community to become default customers of the utility.  In particular, under pricing policy $\chi$, each household must be better off inside the community than outside, facing the NEM payment function $P_{\bm \pi}$, under which the  optimal consumption is given by
 \begin{equation}\label{eq:NEM}
	\begin{aligned}
\mathcal C_i^{\text{NEM}}:=\min_{\tilde{\pmb {\mu}}_i}\
		& \mathbb E
		\left[
		\sum_{t=1}^T P_{\pmb \pi}(z_{i,t})+\mathbbm 1(\tau_{i,t} =1) q(d_{i,t} - c_{i,t})
		\right] \\
		\text{s.t.}
		& \quad (1)-(4), \; c_{i,t} =  \tilde{\mu}_{i,t}(\pmb x_{i,t}),  \; \forall t.
	\end{aligned}
\end{equation}
The individual rationality constraint of a pricing policy $\chi$ is then defined by comparing each household's optimal cost under $\chi$ against the NEM baseline.
\begin{definition}[Individual Rationality] \label{def:IR}
	The pricing policy $\chi$ satisfies individual rationality if every household's minimum cost under $\chi$ is no greater than that under NEM, \ie
\begin{equation}\label{eq:IR}
\mathbb E \bigg[\sum_{t=1}^T P_{\chi}(z^\chi_{i,t}; \pmb \psi_t) + \mathbbm 1(\tau_{i,t} =1) q(d_{i,t} - c^\chi_{i,t})  \bigg] \le \mathcal C_i^{\text{NEM}}.
\end{equation}
\end{definition}

\subsubsection{Upper level optimization}  The upper-level optimization is
community cost minimization subject to pricing and operating constraints.
\begin{equation}\label{eq:upper}
	\begin{aligned}
		\min_{\chi}
		& \quad \mathbb E
		\left[
		\sum_{t=1}^T P_{\pmb \pi} \bigg (\sum_{i \in [N]} z_{i,t}\bigg) + \sum_{j \in \mathcal J_t} q(d_{j,t} - c_{j,t} )
		\right] \\
		\text{s.t.}
		& \quad (1)-(5), (7)-(9),~ c_{i,t} = \mu_{i,t}^\chi (\pmb x_{i,t}, \pmb \psi_t ), \forall i, t,
	\end{aligned}
\end{equation}
where the set $\mathcal J_t := \{ j \in [N]\, |\, \tau_{j,t} = 1 \}$ represents the set of EVs that have reached their deadlines. We assume that the coordinator knows the deadlines and the remaining demands of all plugged-in EVs and can measure each household's available renewable generation.

\subsection{Model assumptions}
\begin{enumerate}[label=A\arabic*)]
	\item The incompletion penalty function $q$ is a strictly increasing convex function satisfying $q'(0) > \pi^+$.
	\item Initial demand and deadline satisfy $ D_i \le  T_i \bar c$.
	\item Households are rational price-takers, minimizing their consumption costs given the community charging price, treating the community price $(\pmb \psi_t)$ as exogenous. Households truthfully report deadlines for deferrable jobs. 
	\item The coordinator is non-profit and budget-balanced. Any surplus will be returned to households {\it ex post}.
\end{enumerate}
Assumption A1 allows EVs to charge with both purchased energy and renewables. A2 is made without loss of generality.
The price-taking assumption in A3 is a standard microeconomic assumption for competitive markets, introduced to prevent households from strategically influencing future community prices. This is realistic since households lack access to others' information. The truthful revelation of deadlines assumption removes adverse selection by non-truthful revelation of deadlines.   Designing mechanisms such as delay-differentiated pricing to counter such strategic behavior is outside the scope of this work. Under A4, the community pricing policy must generate sufficient revenue to cover the costs.  We do not consider specific mechanisms for distributing the surplus, since such distributions are operated {\it ex post.} Standard solutions include various proportional allocation methods and cooperative game theoretic  mechanisms \cite{Young:94}.

%% file: sections/CenSolution_v8.tex
We begin by  establishing that the optimal scheduling policy admits a threshold structure, which serves as the foundation for deriving the distributed pricing rule in the subsequent analysis.

\subsection{Centralized scheduling problem}
The centralized scheduling problem is a finite-horizon, discrete-time, constrained Markov decision process (MDP) $\mathcal M = (\mathcal S, \mathcal A, \omega, T, \mathbb P)$. The state vector $\pmb x_t \in \mathcal S$ remains as defined previously, while the action vector consists of charging decisions of all $N$ EVs, $\pmb c_t := (c_{1,t}, \ldots, c_{N,t}) \in \mathcal A \subset \mathbb R_+^{N}$. The transition kernel $\mathbb P$ includes $f_{i,t}$, $h_{i,t}$, $\mathcal P_{i,t}$, and the arrival Bernoulli distribution. The stage cost $\omega $ is from the coordinator's upper-level optimization in the distributed problem:
$
\omega(\pmb x_t, \pmb c_t) := P_{\pmb \pi}(z_t) + \sum_{j \in \mathcal J_t} q(d_{j,t} - c_{j,t}).
$

The centralized stochastic dynamic optimization problem optimizes over a sequence of policies $\pmb \nu:= (\nu_1, \ldots, \nu_T)$:
\begin{equation*}
 \begin{aligned}
  \min_{\pmb \nu}
  & \quad \mathbb E
  \bigg[
  \sum_{t=1}^T \omega (\pmb x_t, \pmb c_t)
  \bigg] \\
  \text{s.t.}
  & \quad (1)-(4), \; c_{i,t} = \nu_{i,t}(\pmb x_{t}), \quad \forall i, t.
 \end{aligned}
\end{equation*}

Solving the stochastic dynamic program from Bellman equation is intractable. Instead, we study the structure of the optimal scheduling policy.

\subsection{Procrastination Principle and Threshold Policy}
Consider the special case of a single-member community with $N=1$. This case serves as the benchmark for all community members should they leave the community to become default customers of the utility under the regulated NEM tariff.  More importantly, this case reveals the intuitive procrastination structure of the optimal scheduling policy.

For simplicity, we ignore non-deferrable demand and consider the simpler problem of scheduling deferrable demand under the NEM tariff with retail (purchasing) price $\pip$ (\$/kWh) and compensation (selling) price $\pim$ (\$/kWh) to the utility.

Because $\pim < \pip$, EV charging demand should be met first with locally available renewable energy.  In other words, importing electricity from the utility should be delayed as much as possible until meeting the deadline becomes impossible. We formalize this intuition as the {Procrastination Policy} \cite{Jeon&Tong&Zhao:23CDC}:\\[0.3em]
\underline{\it Procrastination Policy:} In each scheduling interval:
\begin{enumerate}
 \item If the available renewable is more than needed to complete the EV charging demand or greater than the charging capacity, export surplus energy to the grid. The household is a net producer.
 \item If  purchasing electricity from the grid is {\em necessary} to avoid incompletion in the worst case, the demand is in the {\em urgent state}. Import the minimum amount from the grid. In such an interval, the household is a net consumer.
 \item If not urgent, the demand is in the {\em procrastination state.} Charge with the renewable and procrastinate the electricity purchasing decision.  The household is net-zero.
 \end{enumerate}

The Procrastination Policy implies a two-threshold structure that determines when a member needs to export or import electricity to and from the grid, as illustrated in Fig.~\ref{fig:opt_pol_standalone}: (i) the lower threshold $m_{1,t}$ below which grid purchase is necessary when the locally generated renewable is insufficient, and (ii) the higher threshold $M_{1,t}$ above which locally generated renewable energy is more than needed and exporting to the grid is necessary.  In between, all locally generated renewable energy is used for charging.

The two thresholds $(m_{1,t},M_{1,t})$ are easily computed. Recall that, at the beginning of interval $t$, $d_{1,t}$ is the remaining demand to be fulfilled by the deadline, $\bar c$ is the maximum charging amount in a single interval, and $\tau_{1,t}$ is the number of remaining intervals to the deadline.

When the locally generated renewable is small, to avoid the incompletion penalty of unserved demand at the deadline, the minimum charge that must be made in interval $t$ is
\begin{equation}\label{eq:mit}
m_{1,t}(d_{1,t}, \tau_{1,t}) := \max\{d_{1,t} - (\tau_{1,t} - 1) \bar c, 0\}.
\end{equation}
When the locally generated renewable energy is abundant, the maximum amount of renewable energy that can be used for EV charging is
\begin{equation}\label{eq:Mit}
M_{1,t}(d_{1,t}) := \min\{d_{1,t}, \bar c\}.
\end{equation}
Note that both $m_{1,t}$ and $M_{1,t}$ are functions of the state of the individual MDP of the member.  The optimal policy for a standalone community member facing the NEM tariff is:
\begin{equation}\label{eq:optstandalone}
 \tilde \mu_{1,t}(\pmb x_{1,t}) = \begin{cases}
  {m_{1,t}(d_{1,t}, \tau_{1,t}),} & g_{1,t} \le m_{1,t}(d_{1,t}, \tau_{1,t}) \\
  g_{1,t}, & m_{1,t} (d_{1,t}, \tau_{1,t}) < g_{1,t} \\ &\le M_{1,t}(d_{1,t}) \\
  M_{1,t}(d_{1,t}), & M_{1,t}(d_{1,t}) < g_{1,t},
 \end{cases}
\end{equation}
which achieves $\mathcal C_1^{\mathrm{NEM}}$. The optimal decisions for more general cases are derived in \cite{Jeon&Tong&Zhao:23CDC}.
\begin{figure}[t]
 \centering
 \includegraphics[width=0.8\columnwidth]{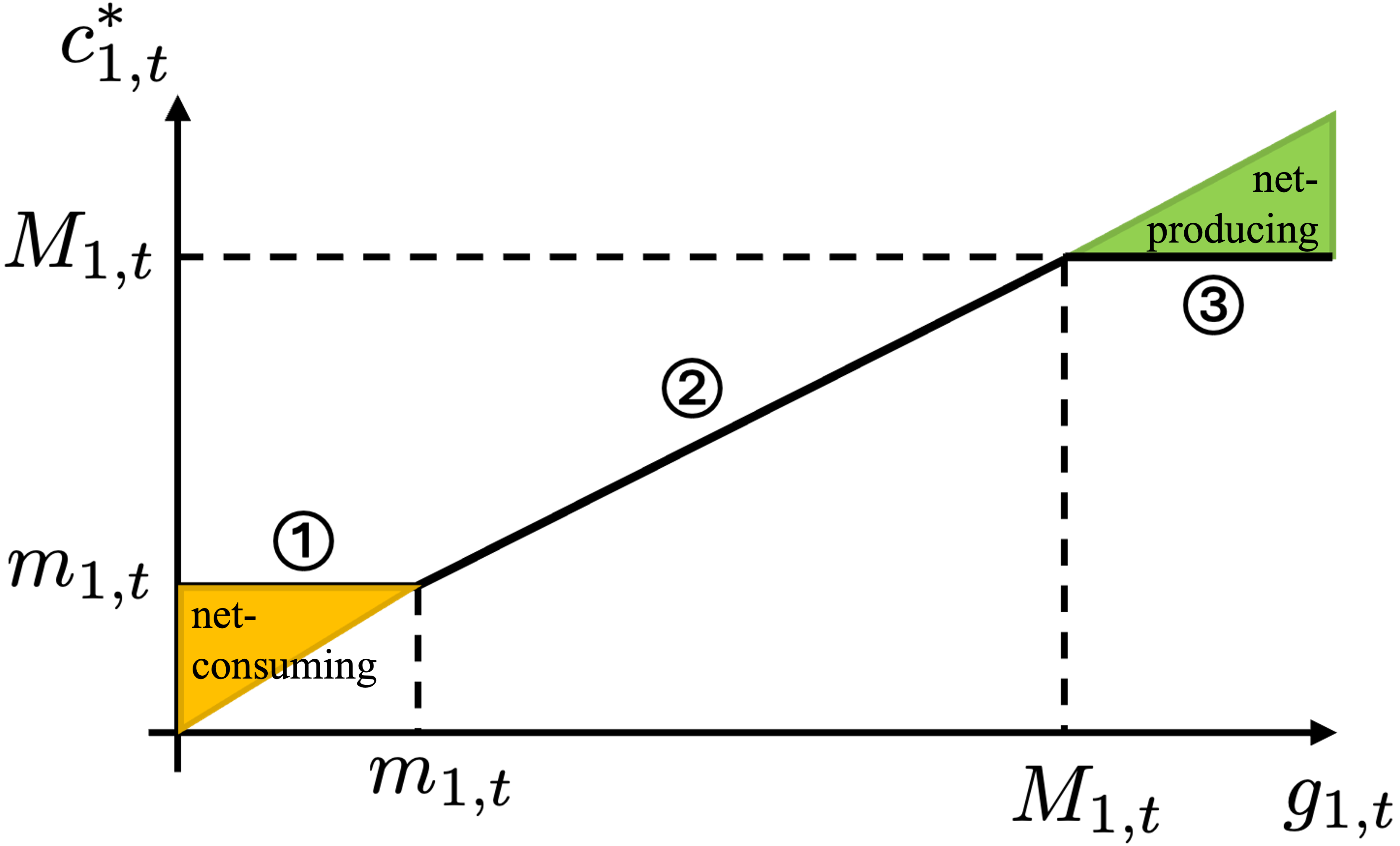}
 \caption{The optimal charging policy of a stand-alone customer with three zones on the net-renewable generation:  \textcircled{1} net-consuming (yellow) zone where the customer net-imports from the grid, \textcircled{2} net-zero zone where the customer uses only local renewable, and \textcircled{3} net-producing (green) zone where the customer is a net producer.}
 \label{fig:opt_pol_standalone}
\end{figure}

\subsection{Optimal threshold structure of centralized decision}
We now consider the optimal centralized scheduling for an energy community with size $N$.

\begin{theorem}[Optimal threshold policy]\label{thrm1}
	The optimal charging policy $\nu_t^*: \pmb x_t \mapsto \pmb c_t^*=(c_{1,t}^*, \ldots, c_{N,t}^*)$ is a threshold policy defined by two thresholds, $M_t(\pmb d_t)$ and $m_t(\pmb d_t, \pmb \tau_t)$, on the aggregate net renewable generation $g_t$, and the net-zero zone policy $\pmb \rho_t (\pmb x_t) = (\rho_{1,t}(\pmb x_t), \ldots, \rho_{N,t} (\pmb x_t))$:
	\begin{equation}
		c_{i,t}^* =
		\begin{cases}
			m_{i,t} (d_{i,t}, \tau_{i,t}) & g_t \le m_t (\pmb d_t, \pmb \tau_t)\\
			\rho_{i,t}(\pmb x_t), & m_t (\pmb d_t, \pmb \tau_t) < g_t \le M_t (\pmb d_t) \\
			M_{i,t}(d_{i,t}), & M_t(\pmb d_t) < g_t
		\end{cases}\label{eq:cen_opt}
	\end{equation}
	where $\pmb \rho_t(\pmb x_t)$ satisfies $\sum_{i \in [N]} \rho_{i,t}(\pmb x_t) = g_t$. The two thresholds are defined as
	\begin{equation*}
		m_t(\pmb d_t, \pmb \tau_t) := \sum_{i \in [N]} m_{i,t}(d_{i,t}, \tau_{i,t}), \; M_t(\pmb d_t) := \sum_{i \in [N]} M_{i,t}(d_{i,t}).
	\end{equation*}
\end{theorem}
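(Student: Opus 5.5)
We prove the statement by backward induction on the Bellman equation of the centralized MDP $\mathcal M$, carrying along a structural hypothesis on the cost-to-go. Let $V_{t}(\pmb x_t)$ denote the optimal cost-to-go from interval $t$. Because the exogenous processes $(r_{i,t},\bm p_{i,t})$ and EV arrivals do not depend on the actions, the Bellman equation is
\[
V_t(\pmb x_t)=\min_{\pmb c_t\in\mathcal A(\pmb x_t)}\Big\{P_{\pmb\pi}\Big(\sum_i c_{i,t}-g_t\Big)+\sum_{j\in\mathcal J_t}q(d_{j,t}-c_{j,t})+\mathbb E\big[V_{t+1}(\pmb x_{t+1})\mid \pmb x_t,\pmb c_t\big]\Big\}.
\]
Here $\mathcal A(\pmb x_t)=\prod_i[0,\min\{d_{i,t},\bar c\}]$. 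The inductive hypothesis is that $V_{t+1}$ is convex and nonincreasing in each remaining demand $d_{i,t+1}$. Furthermore, its marginal value of remaining demand satisfies two sandwich bounds, for $\pmb d$ in the feasible region $d_{i}\le \tau_i\bar c$:
\[
\pi^-\le -\partial_{c_i}\mathbb E V_{t+1}\le \pi^+.
\]
Equivalently, $\pi^-\le \partial_{d_i}V_{t+1}\le\pi^+$. The upper bound $\pi^+$ holds because one extra kWh of demand can always be covered by buying it later at $\pi^+$ (using A2 and the definition of $m_{i,t}$, which keep the job feasible). The lower bound $\pi^-$ holds because reducing future demand by one kWh saves at least the foregone export credit $\pi^-$. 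For $d_i$ beyond the feasible region the marginal cost jumps to $q'\ge q'(0)>\pi^+$ by A1. The base case $t=T$ (i.e., $V_{T+1}\equiv 0$ together with the terminal penalty) follows from A1 and convexity of $q$.

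Given the hypothesis, the one-stage problem is convex in $\pmb c_t$: $P_{\pmb\pi}$ is convex since $\pi^+>\pi^-$, and the future term is convex in $\pmb c_t$ because $d_{i,t+1}=d_{i,t}-c_{i,t}$ is affine. We then read off the KKT conditions zone by zone, with the three cases defined by the aggregate $g_t$.

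\emph{Zone $g_t\le m_t$.} At $\pmb c=\pmb m_t$ the community is weakly net-consuming, so the marginal stage cost of charging is $\pi^+$. This dominates the marginal future saving, which is at most $\pi^+$, so no member should charge beyond $m_{i,t}$. On the other hand, charging below $m_{i,t}$ makes the job infeasible and incurs a marginal cost of at least $q'(0)>\pi^+$. Hence $c^*_{i,t}=m_{i,t}$.

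\emph{Zone $g_t>M_t$.} Here the community net-exports even at $\pmb c=\pmb M_t$, so the marginal stage cost is $\pi^-$, which is at most the marginal future saving. Charging up to the cap is therefore optimal, giving $c^*_{i,t}=M_{i,t}$.

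\emph{Zone $m_t<g_t\le M_t$.} Any $\pmb c$ with $\sum_i c_i\neq g_t$ can be improved. If $\sum_i c_i>g_t$, we can move down, since the marginal stage cost $\pi^+$ is at least the future saving. If $\sum_i c_i<g_t$, we can move up, since $\pi^-$ is at most the future saving. Since $\pmb m_t\le\pmb M_t$ componentwise, the box intersects the hyperplane $\sum_i c_i=g_t$, so an optimal $\pmb\rho_t$ on that hyperplane exists. Its particular allocation is left unspecified, as the statement permits.

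The main obstacle is propagating the hypothesis: showing that $V_t$, the value function resulting from this minimization, is again convex in $\pmb d_t$ with marginals in $[\pi^-,\pi^+]$. Convexity follows from partial minimization of a jointly convex function over a convex set, since the constraint set $\{0\le c_i\le\min\{d_i,\bar c\}\}$ is jointly convex in $(\pmb c,\pmb d)$. The feasibility boundary $d_{i}\le\tau_i\bar c$ also needs to be tracked. For the marginal bounds, I would use an envelope or coupling argument: compare the optimal policy from $\pmb d$ with a perturbed policy from $\pmb d+\epsilon e_i$ that absorbs the extra $\epsilon$ either by an immediate purchase at price at most $\pi^+$ or, when $g_t>M_t$, by diverting exports at price $\pi^-$.

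One complication is the random arrival resetting $(d,\tau)$ after completion. This would be handled by noting that the reset occurs only when $d=0$, where the bounds hold trivially, or alternatively by restricting the hypothesis to the continuation of the current job.

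A final subtlety is the boundary case where the expected future marginal exactly equals $\pi^\pm$. In that case the optimizer may not be unique, so the proof should argue that the threshold policy is \emph{an} optimal policy, with ties broken in favor of procrastination.
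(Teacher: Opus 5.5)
Your proposal is correct, but it takes a different route from the paper. The paper never works with the value function. After the no-incompletion lemma, it takes an arbitrary completing policy and, on each sample path, shifts $\epsilon$ of member $i$'s charge at interval $t$ to or from the earliest feasible interval(s) of the same job window. It uses only the slope sandwich $\pi^-\epsilon\le P_{\pmb\pi}(z+\epsilon)-P_{\pmb\pi}(z)\le\pi^+\epsilon$ to show that each move toward (\ref{eq:cen_opt}) costs nothing. Repeating this for $t=1,\dots,T$ turns the policy into the threshold form.

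You instead encode the same exchange as a property of the cost-to-go: the difference quotients of $V_{t+1}$ in $d_i$ lie in $[\pi^-,\pi^+]$ on $d_i\le\tau_i\bar c$ and are at least $q'(0)>\pi^+$ beyond it. You then verify, state by state, that the threshold action minimizes the Bellman right-hand side. These bounds are proved by exactly the paper's coupling: absorb an extra $\epsilon$ greedily in slack intervals, or delete $\epsilon$ from the earliest positive charges. They hold directly for the optimal value, so the ``propagation'' you flag as the main obstacle is not really an induction step. Convexity of $V_t$ and the KKT conditions are also unnecessary. Your zone arguments only use that the objective is monotone along segments toward the threshold action. The uniform bounds already give this, since the stage slope is exactly $\pi^\pm$ while the sign of $z_t$ is kept and the future slope lies on the correct side.

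What your route buys is a clean DP verification argument, with no iterated policy surgery or adaptedness bookkeeping after repeated modifications. It also gives a shadow-price reading: the marginal value of remaining demand lies in $[\pi^-,\pi^+]$, which is suggestive for the net-zero zone. The paper's route buys a fully path-wise proof with no regularity or measurability requirements on value functions over the uncountable state space. Its template is also reused verbatim for Lemma~\ref{lem:completion} and Proposition~\ref{prop1}.

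Two small corrections. First, $V_{t+1}$ should be non\emph{de}creasing in $d_i$, as your own bound $\partial_{d_i}V_{t+1}\ge\pi^-$ says. Second, the arrival reset is harmless because the charger is released at the exogenous deadline, not at completion. If it were released at completion, the continuation value would depend on the completion time. Then both your bounds and the paper's interchange, which relies on the state at the deadline being unchanged, would fail.
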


The proof is in Appendix~\ref{sec:appendixA} by an interchange argument. The optimal centralized policy extends the stand-alone procrastination strategy from the individual to the community level, with the same net-consuming, net-zero, and net-producing zones defined by two thresholds $(m_t,M_t)$ on the aggregate surplus renewable generation $g_t$. The lower threshold $m_t$ is the aggregate of individual lower thresholds $m_{i,t}$,  and the upper threshold $M_t$ is the sum of individual upper thresholds $M_{i,t}$ from (\ref{eq:mit})-(\ref{eq:Mit}).

Note that, in the net-consuming zone, the community imports  the minimum amount from the grid for EVs in an urgent state.  In the net-producing state, all EVs charge to their maximum per-interval level. In both cases, the optimal import/export quantities are given in closed form, and the computational cost is linear in the size of the community.

In the intermediate net-zero zone, charging is allocated so that aggregate charging matches aggregate generation; the community neither imports nor exports electricity. Unfortunately, the optimal allocation of aggregated renewables to individual households requires solving the Bellman equation, which is generally intractable.

The most striking features of centralized scheduling are the two-threshold structure and the closed-form expressions for the thresholds and the export/import quantities.  Next, we will exploit these properties to obtain a price-based distributed scheduling solution.

%% file: sections/ThreshPricingRule_v8.tex
This section focuses on the community pricing policy of the coordinator. Given that directly solving the bilevel optimization seems intractable, our approach is to leverage the structural property of the optimal centralized scheduling solution by adopting a three-zone scheduling structure and finding prices that directly induce the optimal responses from individual households.

It turns out that finding the right prices to induce the optimal response is easy for the net-consuming and net-producing zones.  In the net-zero zone, unfortunately, it is more challenging, and solving a dynamic program is necessary. Trading optimality in the net-zero zone for simplicity, we adopt the NEM pricing rule there.  We then establish that TPR is feasible for the bilevel optimization in Proposition~\ref{prop:IR}-\ref{prop:RA}.

\subsection{Threshold Pricing Rule (TPR)}
TPR has the same structure as the utility's NEM tariff; it is uniform, piecewise linear, and dynamic with time-varying parameters $\pmb \psi_t =(\psi_t^+, \psi_t^-)$.  Specifically, for member $i$ with net consumption $z_i$,  the payment is given by, as in (\ref{eq:NEMpayment})
\begin{equation}\label{eq:community_payment}
	P_{\pmb \psi_t}(z_{i}) = \big[\mathbbm{1}(z_i > 0) \psi_t^+ + \mathbbm {1} (z_i \le 0) \psi_t^- \big]z_{i},
\end{equation}
where $\psi_t^+$ and $\psi_t^-$ represent the consumption and compensation rates within the community, respectively.

\begin{definition}[Threshold Pricing Rule] The TPR pricing policy $\chi_{\text{\normalfont TPR}}: \pmb x_t \mapsto \pmb \psi_t$ sets the prices $\pmb \psi_t$ using the thresholds of the optimal centralized policy:
\begin{equation}
	\pmb\psi_t = 	
	\begin{cases}
		\pmb \pi^+ := (\pi^+, \pi^+), & g_t \le m_t(\pmb d_t, \pmb \tau_t)  \\
		\pmb \pi = (\pi^+, \pi^-), & m_t (\pmb d_t, \pmb \tau_t)  < g_t \le M_t (\pmb d_t) \\
		\pmb \pi^- := (\pi^-, \pi^-), & M_t (\pmb d_t) < g_t.
	\end{cases}
\end{equation}
\end{definition}

TPR is simple. It sets prices according to the zonal structure of  the optimal centralized scheduling policy.  It sets the price to $\pip$ in the net-consuming zone,  $\pim$ in the net-producing zone, and standard NEM in the net-zero zone.  Most significant, perhaps,  is that TPR is independent of the model parameters of the upper and lower level MDPs, making TPR robust to the underlying modeling assumptions.

Note that the computation involved in TPR is trivial; with $(\pim,\pip)$ known, the computational cost comes from computing thresholds and is linear in community size $N$. Note also that TPR is linear in the net-consuming/producing zones and piecewise linear (NEM) in the net-zero zone.

\subsection{Optimal household response to TPR}
TPR is designed so that the community price can induce procrastination in individual members as an outcome of lower-level optimization. The following proposition formalizes the solution to the household's cost minimization problem (\ref{eq:householdprob}) under TPR.

\begin{proposition}[Optimal household response to TPR]
	Given the community pricing rule $\chi_{\text{\normalfont{TPR}}}$, member $i$'s optimal charging decision is myopic, depends only on the current broadcast price, and is given by $\mu_i^{\chi_{\text{\normalfont{TPR}}}} $:
	\begin{equation*}
		\mu_i^{\chi_{\text{\normalfont{TPR}}}} (\pmb x_{i,t}, \pmb \psi_t)=
		\begin{cases}
			m_{i,t}(d_{i,t}, \tau_{i,t}), & \pmb\psi_t = \pmb \pi^+ \\
			\tilde{\bm \mu}_{i,t} (\pmb x_{i,t}), & \pmb\psi_t = \pmb \pi \\
			M_{i,t}(d_{i,t}), &\pmb \psi_t = \pmb \pi^-
		\end{cases}.
	\end{equation*}
	\label{prop1}
\end{proposition}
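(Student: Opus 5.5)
The plan is to show that, under $\chitpr$, the household's dynamic program (\ref{eq:householdprob}) is solved by the procrastination-type rule above. The proof is an interchange (exchange) argument like the one behind (\ref{eq:optstandalone}), not a solution of the Bellman equation. Price-taking (A3) lets household $i$ treat $\{\pmb\psi_t\}$ as an exogenous process. Every realization of that process takes values in $\{\pmb\pi^+,\pmb\pi,\pmb\pi^-\}$, and all three induce marginal prices between $\pim$ and $\pip$. I will therefore work pathwise. Fix a sample path of prices, renewables and arrivals, and show that any feasible charging sequence for the current EV job can be modified into the proposed one without increasing cost.

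The first step is to characterize the household's feasible set at time $t$. Given the state, any charging sequence that avoids the incompletion penalty must satisfy $c_{i,t}\ge m_{i,t}$, and feasibility requires $c_{i,t}\le M_{i,t}$. Assumption A1 ($q'(0)>\pip$) rules out deliberately leaving demand unmet: replacing one unit of final shortfall by one unit of charging bought at price at most $\pip$ strictly lowers cost. So without loss of generality $c_{i,t}\in[m_{i,t},M_{i,t}]$, and the remaining question is how to allocate the flexible energy over time.

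Next I will treat the three price cases separately.

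\textbf{Case $\pmb\psi_t=\pmb\pi^+$.} The stage payment is linear, $\pip z_{i,t}$, with marginal cost $\pip$ for each unit charged now, whatever the sign of $z_{i,t}$. Consider a policy charging $c_{i,t}=m_{i,t}+\delta$ with $\delta>0$. Moving those $\delta$ units to later intervals before the deadline stays feasible, because the constraint $c\le M$ is slack there by the definition of $m_{i,t}$ (capacity $\bar c$). In any later interval the marginal price is at most $\pip$, so the move does not increase cost. Hence $m_{i,t}$ is optimal.

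\textbf{Case $\pmb\psi_t=\pmb\pi^-$.} The payment is linear with marginal cost $\pim$, the cheapest possible. Pulling charging forward from any later interval into $t$, up to $M_{i,t}$, weakly lowers cost. It also keeps the future feasible set unchanged or larger, since it reduces the remaining demand. Hence $M_{i,t}$ is optimal.

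\textbf{Case $\pmb\psi_t=\pmb\pi$.} The household faces exactly the standalone NEM stage cost. Below $g_{i,t}$ the marginal cost is $\pim$; above it the marginal cost is $\pip$. The rule (\ref{eq:optstandalone}) then follows from the same two exchanges. Charge renewable-backed energy up to $\min\{g_{i,t},M_{i,t}\}$, because deferring it could only replace a unit priced at $\pim$ with one priced at least $\pim$, and would shrink future flexibility. Import only up to $m_{i,t}$, by the first case's argument.

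In every case the optimal decision depends only on $(\pmb x_{i,t},\pmb\psi_t)$, which gives the myopic property claimed in Proposition~\ref{prop1}.

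The main obstacle is making the exchange arguments rigorous when future prices are random and depend on other households' states. In particular, in the $\pmb\pi$ and $\pmb\pi^-$ cases, pulling charging forward changes the future states $(d_{i,s},\tau_{i,s})$ and hence the future thresholds. I plan to handle this in two ways. First, argue pathwise and then take expectations. Second, use a monotonicity lemma: the household's optimal cost-to-go is nondecreasing in $d$ and has slope bounded between $\pim$ and $\pip$ (and at most $q'$ near the deadline). With that lemma, every unit moved earlier at price $\pim$, or deferred from price $\pip$, is weakly beneficial. I would establish the lemma first by backward induction on $t$, reusing the interchange step from the proof of Theorem~\ref{thrm1}. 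One subtlety is the coupling through prices: each household's $m_{i,t}$ and $M_{i,t}$ enter the community thresholds $m_t$ and $M_t$. Assumption A3 sets this aside, since households take the price process as exogenous.
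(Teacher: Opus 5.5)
Your proposal is correct and follows essentially the paper's route: a pathwise interchange argument under exogenous prices (A3), restricting to completing sequences via $q'(0)>\pi^+$ and using that every TPR payment has slopes in $[\pi^-,\pi^+]$ to shift $\epsilon$ of charging to or from later intervals in each of the three price cases. Your backup cost-to-go monotonicity lemma is not needed: for a fixed realization of $\{g_{i,k},\pmb\psi_k\}$, the thresholds only encode the constraints $\sum_k c_{i,k}=d_{i,t}$, $0\le c_{i,k}\le\bar c$ on the job's charging sequence, and pathwise dominance by an adapted rule already yields optimality in expectation.
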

The proof is in Appendix~\ref{sec:appendixA}. When the community rate is $\pmb{\pi}^+$, household's optimal charging decision follows the procrastination policy in the net-consuming zone: urgent households $(m_{i,t} > 0)$ charge the minimum required amount to avoid unnecessary purchases at the high rate $\pip$, while procrastinating households $(m_{i,t} = 0)$ sell their renewables to other urgent households at the high rate $\pip$ to maximize the profit. Together, this replicates the centralized optimal decision in the net-consuming zone. When the community rate is $\pmb {\pi}^-$, households maximize their charging to exploit the low purchase rate $\pi^-$, rather than selling renewables at the same low rate, again replicating the centralized optimal decision in the net-producing zone.

For $\pmb \psi_t = \pmb\pi^+$ and $\pmb \psi_t = \pmb \pi^-$, the distributed scheduling under TPR recovers the centralized optimal decision of Theorem~\ref{thrm1}. Given state $\pmb x_t$, if the aggregate net renewable generation $g_t$ lies in the net-consuming zone $(g_t \le m_t)$ or the net-producing zone $(g_t \ge M_t)$, then each member's myopic best response to the broadcast price coincides with the centralized optimal action. In other words, members acting in their own interest under TPR make the same decisions that the coordinator would have imposed centrally, which shows that TPR has the incentive-alignment feature in the two outer zones.


\subsection{Feasibility of TPR: Individual Rationality}
Individual rationality as defined in (\ref{eq:IR}) is part of the upper-level optimization constraint, necessary to ensure the stability of the community as a coalition. We establish this property for TPR in Proposition~\ref{prop:IR}.

\begin{proposition}[Individual rationality of TPR] \label{prop:IR}
	Under $\chi_{\text{\normalfont{TPR}}}$, every household's optimal charging cost 	$\mathcal C_i^{\chi_\text{\normalfont TPR}}$ is no greater than their stand-alone charging cost $\mathcal C_{i}^{\text{\normalfont NEM}}$:
	\begin{equation*}
		\mathcal C_i^{\chi_\text{\normalfont TPR}} \le \mathcal C_{i}^{\text{\normalfont NEM}}, \quad \forall i.
	\end{equation*}
\end{proposition}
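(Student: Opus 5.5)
Our approach is a policy-comparison argument: we evaluate the stand-alone NEM-optimal policy $\tilde{\pmb \mu}_i$ from (\ref{eq:optstandalone}), which attains $\mathcal C_i^{\text{NEM}}$, under the TPR price process, and we show that its cost under TPR is pathwise no larger than its cost under NEM. Since $\mathcal C_i^{\chi_{\text{TPR}}}$ is the optimal cost of problem (\ref{eq:householdprob}) under TPR, and $\tilde{\pmb\mu}_i$ is a feasible policy there (it depends only on $\pmb x_{i,t}$, which is allowed since $\mu_{i,t}$ may depend on $(\pmb x_{i,t},\pmb\psi_t)$), we obtain
\[
\mathcal C_i^{\chi_{\text{TPR}}} \le \mathbb E\Big[\sum_{t} P_{\pmb\psi_t}(\tilde z_{i,t}) + \mathbbm 1(\tau_{i,t}=1)q(\cdot)\Big] \le \mathbb E\Big[\sum_t P_{\pmb\pi}(\tilde z_{i,t}) + \mathbbm 1(\tau_{i,t}=1) q(\cdot)\Big] = \mathcal C_i^{\text{NEM}}.
\]
The key point is that the household's own state trajectory $(d_{i,t},\tau_{i,t},r_{i,t},\pmb p_{i,t})$ under a fixed policy $\tilde{\pmb\mu}_i$ does not depend on prices. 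Its transitions (\ref{eq2}), $f_{i,t}$, and $h_{i,t}$ are driven only by its own actions and exogenous noise. The incompletion penalties are therefore identical in both expressions, and only the energy payments need comparing.

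The second step is the pathwise payment comparison $P_{\pmb\psi_t}(z)\le P_{\pmb\pi}(z)$ for every $z$ and every realization of $\pmb\psi_t\in\{\pmb\pi^+,\pmb\pi,\pmb\pi^-\}$. This holds for all $z$, so the argument does not need to track which TPR zone the other households place the community in. If $\pmb\psi_t=\pmb\pi$, the two payments are equal. For the other two cases we split on the sign of $z$:
\begin{itemize}
\item If $\pmb\psi_t=\pmb\pi^+$, then for $z>0$ both payments equal $\pi^+z$, while for $z\le 0$ we have $\pi^+ z\le \pi^- z$ because $\pi^+>\pi^-$ and $z\le0$.
\item If $\pmb\psi_t=\pmb\pi^-$, then for $z\le 0$ the payments coincide, while for $z>0$ we have $\pi^-z<\pi^+z$.
\end{itemize}
Hence each TPR price vector is pointwise dominated by the NEM payment function. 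Summing over $t$ and taking expectations over the joint law, which includes the other households' states determining $\pmb\psi_t$, gives the middle inequality above.

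The main step needing care is the measurability and feasibility claim in the first paragraph. The expectation in (\ref{eq:householdprob}) is taken over the joint process in which $\pmb\psi_t$ depends on the whole community state. We must verify that using $\tilde{\pmb\mu}_i$ for household $i$ does not alter the distribution of its own trajectory. This follows from independence across households and from A3, the price-taker assumption, under which $\pmb\psi_t$ is treated as exogenous. Because the pointwise inequality holds for every price realization, no independence between $\pmb\psi_t$ and $\tilde z_{i,t}$ is actually needed. We therefore expect this step to be routine once stated carefully.
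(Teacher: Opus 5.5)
Your proof is correct and has the same two-step structure as the paper's. First, for each TPR price, $P_{\pmb\psi_t}(z)\le P_{\pmb\pi}(z)$ holds pointwise, and you apply this at the stand-alone NEM-optimal actions. Second, the TPR response does at least as well under TPR prices as those NEM-optimal actions.

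The difference lies in how the second step is justified. The paper obtains it path-wise from Proposition~\ref{prop1}. You obtain it in expectation, directly from the definition of $\mathcal C_i^{\chi_{\text{TPR}}}$ as an optimum over a policy class that contains $\tilde{\pmb\mu}_i$. Your route lets you treat the penalty terms and the price-taker subtlety explicitly; the paper's proof leaves them implicit.
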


With proof in Appendix~\ref{sec:appendixA}, we offer simple intuitions why the community pricing of TPR is more favorable than NEM outside the community. Note that TPR is defined separately in three zones.  In the net-zero zone, TPR matches with the NEM, and the optimal response policy is identical to that outside the community.   In the net-consuming and net-producing zones, TPR prices are linear, unlike NEM.

In the net-consuming zone $(\pmb \psi_t = \pmb \pi^+)$, a procrastinating member defers charging and sells surplus renewable energy to urgent members at the price $\pip$ (Member 3 in Fig.~\ref{fig:IR}a). Since $\pip$ exceeds the stand-alone compensation rate $\pim$, this member earns strictly more than outside the community, while urgent members pay no more than the outside consumption rate $\pip$ (Members 1 and 2 in Fig.~\ref{fig:IR}a).

In the net-producing zone $(\pmb \psi_t = \pmb \pi^-)$, surplus renewable energy is shared with urgent members at the price $\pim$ (Member 2 in Fig.~\ref{fig:IR}b). Since $\pim$ is below the stand-alone consumption rate $\pip$, the urgent members pay strictly less, while procrastinating households receive the same compensation rate $\pim$ as they would from the grid (Households 1 and 3 in Fig.~\ref{fig:IR}b). In both zones, the household whose net consumption has the opposite sign to the community aggregate net consumption captures a strict benefit from the price differential—highlighted by red arrows in Fig.~\ref{fig:IR}—while all others are left no worse off.

\begin{figure}[t]
	\centering
	\includegraphics[width=\columnwidth]{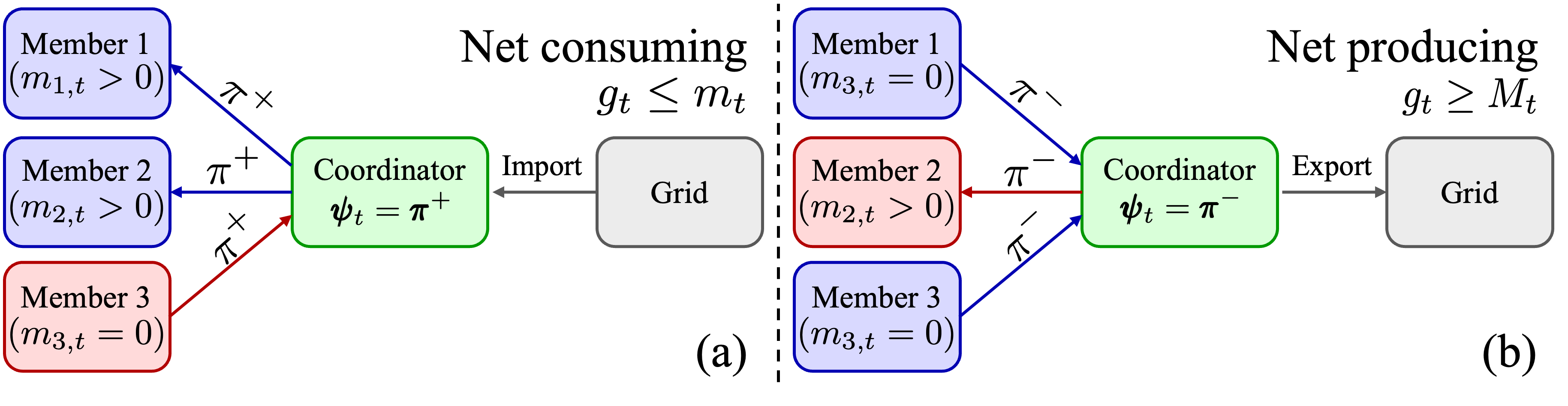}
	\caption{Illustration of individual rationality under the TPR. Arrows represent the net consumption $z_{i,t}$ and $z_t$. (a) Net consuming. (b) Net producing}
	\label{fig:IR}
\end{figure}

\subsection{Feasibility of TPR: Revenue Adequacy}
The revenue adequacy condition is necessary to ensure community members do not incur out-of-market costs. The Proposition~\ref{prop:RA} below establishes this property formally.  The key intuition is that sharing surplus renewable energy within the community is more profitable than selling it to the utility.

\begin{proposition}[Revenue adequacy of TPR] \label{prop:RA}
	Let $\pmb \psi_t^{\chi_{\text{\normalfont{TPR}}}}$ denote the price determined by {\normalfont TPR}, and let $c_{i,t}^{\text{{\normalfont TPR}}} = \mu_i^{\chi_{\text{\normalfont{TPR}}}} (\pmb x_{i,t}, \pmb \psi_t)$ denote the optimal individual action under {\normalfont TPR}. Then, {\normalfont TPR} satisfies the revenue adequacy for all $t$:
		\begin{align*}
	\sum_{i \in [N]} P_{\pmb \psi_t^{\chi_{\text{\normalfont{TPR}}}}}& \big(
	c_{i,t}^{\text{{\normalfont TPR}}} - g_{i,t})\ge P_{\pmb \pi} \bigg(
	\sum_{i \in [N]}  c_{i,t}^{\text{{\normalfont TPR}}} - g_{i,t}			
	\bigg).
\end{align*}
\end{proposition}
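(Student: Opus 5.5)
The plan is a case analysis over the three TPR zones, using Proposition~\ref{prop1} to pin down each household's response and hence the sign of the community net consumption $z_t := \sum_{i} z_{i,t}$, where $z_{i,t} = c_{i,t}^{\text{TPR}} - g_{i,t}$. The only tool needed besides this is a convexity/sandwich fact about the NEM payment: since $\pip > \pim$, for every real $z$ we have $\pim z \le P_{\pmb\pi}(z) \le \pip z$. Indeed, $P_{\pmb\pi}(z)=\max\{\pip z,\pim z\}$ is convex and superadditive in the reverse direction, i.e. $P_{\pmb\pi}$ is subadditive: $P_{\pmb\pi}(\sum_i z_i) \le \sum_i P_{\pmb\pi}(z_i)$.

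\textbf{Net-consuming zone} ($g_t \le m_t$, $\pmb\psi_t = \pmb\pi^+$). By Proposition~\ref{prop1}, $c_{i,t}^{\text{TPR}} = m_{i,t}$ for all $i$. Summing gives $z_t = m_t - g_t \ge 0$, so $P_{\pmb\pi}(z_t) = \pip z_t$; the boundary case $z_t=0$ is harmless since both sides then vanish. The community side is linear:
\[
\sum_i P_{\pmb\psi_t}(z_{i,t}) = \pip \sum_i z_{i,t} = \pip z_t ,
\]
so equality holds.

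\textbf{Net-producing zone} ($g_t > M_t$, $\pmb\psi_t=\pmb\pi^-$). Here $c_{i,t}^{\text{TPR}} = M_{i,t}$, so $z_t = M_t - g_t < 0$ and $P_{\pmb\pi}(z_t) = \pim z_t$. Again $\sum_i P_{\pmb\psi_t}(z_{i,t}) = \pim z_t$, giving equality.

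\textbf{Net-zero zone} ($m_t < g_t \le M_t$, $\pmb\psi_t=\pmb\pi$). The community price coincides with NEM, so we need $\sum_i P_{\pmb\pi}(z_{i,t}) \ge P_{\pmb\pi}(\sum_i z_{i,t})$. This is exactly subadditivity of the convex, positively homogeneous function $P_{\pmb\pi}(z)=\max\{\pip z,\pim z\}$. Alternatively, one can argue directly:
\[
\sum_i \max\{\pip z_{i,t},\pim z_{i,t}\} \ge \max\Big\{\pip \sum_i z_{i,t},\ \pim \sum_i z_{i,t}\Big\}.
\]
This step does not require knowing each $z_{i,t}$. That matters because under $\tilde\mu_{i,t}$ individual households may import or export even though the aggregate lies in the net-zero zone.

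The only subtlety I anticipate is the convention $\mathbbm 1(z\le 0)$ at $z=0$ and checking that $P_{\pmb\pi}$ really equals $\max\{\pip z,\pim z\}$ there. Both are immediate since $\pip>\pim$, so I expect no real obstacle. The remaining care point is that Proposition~\ref{prop1} indeed applies pointwise in each zone, which it does by its statement.
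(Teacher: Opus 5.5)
Your proof is correct and matches the paper's: you get equality in the two outer zones from the sign of $z_t$ forced by Proposition~\ref{prop1}, and in the net-zero zone your subadditivity of $P_{\pmb\pi}(z)=\max\{\pip z,\pim z\}$ is exactly the paper's $Q^+$/$Q^-$ surplus computation (you also handle the boundary case $z_t=0$, which the paper's claim ``$\sum_i c_{i,t}^{\text{TPR}}>g_t$'' glosses over). One harmless slip: your opening sandwich $\pim z\le P_{\pmb\pi}(z)\le \pip z$ is false for $z<0$, where $P_{\pmb\pi}(z)=\pim z>\pip z$; you never use it, so simply drop it.
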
																													

{\it Proof:} If TPR price is $\pmb \pi^+$, community is importing since TPR induced action satisfies $\sum_i c_{i,t}^{\text{TPR}} > g_t $, and community faces $\pip$ for the net-consumption. As both prices are identical, the power balance implies revenue balance, and there is no surplus. The same arguments applies also for TPR price $\pmb \pi^-$.

For TPR price $\pmb \pi$, TPR is NEM. Let $Q^+$ and $Q^-$ be the total net-consumption and net-production within the community respectively. Suppose $Q^+>Q^- $. With $\pi^+>\pi^-$, the community surplus S is
\[
\mbox{S}=(\pi^+Q^+-\pi^-Q^-)-\pi^+(Q^+-Q^-) \ge 0.
\]
The same argument applies to $Q^+<Q^-$. \hfill $\Box$

\subsection{Asymptotic Optimality}
TPR is not optimal. The suboptimality arises in the net-zero zone, where TPR adopts the NEM pricing rule rather than inducing the optimal centralized allocation. This section shows that, as the community grows, the net-zero zone is visited with vanishing probability under light-traffic conditions, and TPR becomes asymptotically optimal.

\subsubsection{The asymptotic regime}
We consider a homogeneous community in which when household $i$'s charger is unoccupied at the beginning of interval $t$, an EV arrives with probability $\alpha$, independently across households. We further assume that the charging duration of the new arrival $T_i$ is limited by $\bar T$.

For the renewables, we model the individual net renewable generation $g_{i,t}$ as i.i.d.\ random variables across members and stationary in time with marginal mean $\mathbb{E}[g_{i,t}] = \theta_g$. By the strong law of large numbers, as the size of the community grows to infinity, the sample mean of the individual members' $g_{i,t}$ converges to $\theta_g$ almost surely. 


\subsubsection{Asymptotic optimality of TPR}
Let $ V^N(x_0)$ and $V^{N,  \chi_{\text{\normalfont{TPR}}}} (x_0)$ be the optimal centralized scheduling cost and the expected community cost under the TPR for a community of size $N$, respectively. The following theorem establishes the asymptotic optimality of TPR under the light traffic regime.

\begin{theorem}(Asymptotic optimality of TPR in light traffic regime) \label{thrm:asymptoticTPR}
		Consider the homogeneous Bernoulli arrival model with arrival probability $\alpha$. Further, assume that the renewables are i.i.d. random across members and stationary in time with mean $\theta_g$. Then, the TPR is asymptotically optimal as $N \to \infty$, if $\alpha < \theta_g / (\bar c \bar T)$:
	\begin{equation*}
		\lim_{N \to \infty} \; \frac 1 N \bigg( V^{N, \chi_{\text{\normalfont{TPR}}}} (x_0) - V^N(x_0)  \bigg) = 0.
	\end{equation*}
\end{theorem}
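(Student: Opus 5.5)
The plan is to show that, in the regime $\alpha<\theta_g/(\bar c\bar T)$, the community state lies in the net-producing zone $g_t>M_t(\pmb d_t)$ with probability tending to one at every $t\le T$. In that zone TPR broadcasts $\pmb\pi^-$. By Proposition~\ref{prop1}, every household then responds with $M_{i,t}(d_{i,t})$, which is exactly the centralized optimal action of Theorem~\ref{thrm1}. The suboptimality of TPR, which lives only in the net-zero zone (and, harmlessly, nowhere else), therefore has vanishing weight. The per-household cost difference is bounded, so dividing by $N$ makes the contribution of the rare bad events vanish.

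\textbf{Step 1 (occupancy bound).} Because $M_{i,t}(d_{i,t})\le \bar c\,\mathbbm 1(d_{i,t}>0)$, we have $M_t\le \bar c\,O_t$, where $O_t$ is the number of occupied chargers. A charger can be occupied at $t$ only if an arrival occurred in one of the last $\bar T$ intervals. I would couple the arrival process with i.i.d.\ Bernoulli$(\alpha)$ coins $\xi_{i,s}$, independent across $i$ and $s$, so that an arrival at $(i,s)$ requires $\xi_{i,s}=1$. This gives
\[
O_t\le \sum_{i\in[N]}\sum_{s=t-\bar T+1}^{t}\xi_{i,s},
\]
which holds under \emph{any} scheduling policy, and in particular under both TPR and the centralized optimum. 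By the law of large numbers, $O_t/N\le \alpha\bar T+\epsilon$ with probability tending to one. Hence $M_t/N\le \bar c\alpha\bar T+\bar c\epsilon$ with probability tending to one.

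\textbf{Step 2 (renewable concentration).} Since $g_{i,t}$ are i.i.d.\ with mean $\theta_g$, we have $g_t/N\to\theta_g$ in probability for each $t$. Choose $\epsilon$ so small that $\bar c\alpha\bar T+\bar c\epsilon<\theta_g-\epsilon$. Define
\[
E^N:=\bigcap_{t\le T}\big\{g_t/N>\theta_g-\epsilon\big\}\cap\big\{M_t/N<\bar c\alpha\bar T+\bar c\epsilon\big\}.
\]
A union bound over the finitely many $t\le T$ gives $\mathbb P(E^N)\to1$, and on $E^N$ the net-producing zone occurs at every $t$.

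\textbf{Step 3 (coupling and cost comparison).} Run the optimal centralized policy and TPR on the same sample path of arrivals, renewables and non-deferrable demands. By Theorem~\ref{thrm1} and Proposition~\ref{prop1}, their actions coincide whenever the current state is in the net-producing zone, so by induction the two trajectories are identical on $E^N$. Hence
\[
V^{N,\chi_{\text{TPR}}}-V^N\le \mathbb E\big[\,|\mathcal C^{\text{TPR}}-\mathcal C^{\text{opt}}|\,\mathbbm 1\big((E^N)^c\big)\big].
\]
Per household and per stage, the difference in NEM cost is at most $\pi^+(\bar c+|g_{i,t}|)$ terms, and the penalty is at most $q(\bar c\bar T)$ by A2 and $T_i\le\bar T$. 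The total difference is therefore bounded by $K\big(TN+\sum_{i,t}|g_{i,t}|\big)$ for a constant $K$.

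\textbf{Main obstacle.} The delicate point is the last limit. For the bounded part, $\tfrac1N\,KTN\,\mathbb P((E^N)^c)\to0$ directly. For the term $\tfrac1N\sum_{i,t}\mathbb E\big[|g_{i,t}|\mathbbm 1((E^N)^c)\big]$, I would use uniform integrability of the averages $\tfrac1N\sum_i|g_{i,t}|$, which holds for i.i.d.\ integrable variables since these averages converge in $L^1$, together with $\mathbb P((E^N)^c)\to0$. This yields the claimed limit of zero, and the lower bound $V^{N,\chi_{\text{TPR}}}\ge V^N$ is trivial because $V^N$ is the optimal cost. A secondary subtlety is that occupancy depends on the policy through completion times, but the coin-domination bound in Step 1 is policy-independent, which is exactly what the coupling needs.
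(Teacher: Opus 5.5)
Your proposal is correct and follows essentially the paper's argument. The shared steps are the coin coupling that makes $M_t \le \bar c\, I_N(t)$ policy-independent, the strong law plus the light-traffic condition to put both trajectories in the net-producing zone with probability tending to one, induction showing the two trajectories coincide on that good event, and domination on its complement. Define $E^N$ through the coin sums $\frac1N\sum_i\sum_s \xi_{i,s}$ rather than through $M_t$ itself, as you implicitly intend, so that it is a single event serving both trajectories. The only real difference is that you bound the pathwise cost difference directly on $(E^N)^c$ using uniform integrability of $\frac1N\sum_i |g_{i,t}|$. The paper instead first computes each per-household limit through the zone decomposition and then matches the $M$-terms (its Steps 2--3). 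Your route is slightly cleaner because it never needs those individual limits to exist.
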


The sketch of proof is in Appendix~\ref{sec:appendixA}. The theorem implies that if the arrival rate grows slower than the community's aggregate renewable, the distributed scheduling using the TPR becomes asymptotically optimal. The light traffic condition $\alpha \le \theta_g /( \bar c \bar T)$ admits a natural interpretation in terms of job scheduling, where $ N\theta_g /\bar c$ represents the number of EVs that can be simultaneously served by the community's aggregate generation, playing the role of the service rate. It is inversely proportional to the maximum charging duration $\bar T$ because a longer maximum charging duration means that more EVs will be present in the community simultaneously. Under this condition, the community is in the net-producing zone---where $\pmb \pi^-$ induces the centralized optimal action---with probability approaching one as $N \to \infty$.

The homogeneity of arrival probability is assumed for simplicity. If arrival probabilities $\alpha_{i,t}$ vary across households, we can establish asymptotic optimality with the modified light traffic condition of $\sup_{i,t} \alpha_{i,t} < \theta_g/(\bar{c}\,\bar{T})$.

%% file: sections/PriceResponsive_v8.tex
The preceding analysis treats the non-deferrable demand $\bm p_{i,t}$ as price-inelastic. We now allow it to be price-elastic, endowing each household with a private utility $U_{i,t}(\bm p_{i,t})$ so that consumption responds to the community price. The household's objective then becomes surplus maximization rather than cost minimization. Under a modified rule $\tilde \chi_{\text{TPR}}$ that preserves the three-zone NEM-based structure of TPR, the theoretical guarantees of the price-inelastic case carry over in surplus terms: $\tilde \chi_{\text{TPR}}$ retains revenue adequacy, achieves individual rationality with respect to each member's surplus, and remains asymptotically optimal, now converging to the community's social welfare optimum. For notational simplicity, we consider scalar non-deferrable demand in this section; the vector extension is immediate. 

We take the utility $U_{i,t}(p_{i,t})$ to be differentiable, concave, and time-varying, and private to member $i$, who reports only its realized consumption level to the coordinator. Given the community price, it is scheduled interval-by-interval to maximize individual surplus:
\begin{equation}
	\max_{p_{i,t}\in\mathbb{R}_+}\; U_{i,t}(p_{i,t}) - P_{\pmb \psi_t}(p_{i,t} - r_{i,t}).
	\label{eq:elastic-nd}
\end{equation}
Let  $\partial U^{-1}_{i,t}$ denote inverse marginal utility function. For community prices $\pmb \pi^+$ and $\pmb \pi^-$, the optimal non-deferrable consumption levels are given by $p_{i,t}(\pmb \pi^+) = \partial U_{i,t}^{-1}(\pi^+)$ and $p_{i,t}(\pmb \pi^-)=\partial U_{i,t}^{-1}(\pi^-)$, respectively. For simplicity, we assume the utility function is such that $\partial U_{i,t}^{-1}(\pi^+) > 0$ and $\partial U_{i,t} ^{-1}(\pi^-) > 0$. When the price is $\pmb \psi_t = \pmb \pi$, the optimal non-deferrable demand consumption level is determined by a threshold rule based on $r_{i,t}$
\begin{equation}
	p_{i,t}(\pmb \pi, r_{i,t}) =
	\begin{cases}
		\partial U_{i,t}^{-1}(\pi^+), & r_{i,t} < \partial U_{i,t}^{-1}(\pi^+),\\[2pt]
		r_{i,t}, & \partial U_{i,t}^{-1}(\pi^+) \le r_{i,t} \le \partial U_{i,t}^{-1}(\pi^-),\\[2pt]
		\partial U_{i,t}^{-1}(\pi^-), & r_{i,t} > \partial U_{i,t}^{-1}(\pi^-),
	\end{cases}
	\label{eq:elastic-nd-threshold}
\end{equation}
Note that by the concavity of $U_{i,t}$, $\partial U_{i,t}^{-1}$ is monotone decreasing and non-deferrable consumption level satisfies $p_{i,t}(\pmb \pi^+) \le p_{i,t}(\pmb  \pi, r_{i,t}) \le p_{i,t}(\pmb \pi^-)$.

Because the served non-deferrable demand now depends on the broadcast price, so does the member's net renewable generation $g_{i,t}(\pmb \psi) := r_{i,t} - p_{i,t}(\pmb \psi)$. TPR cannot be applied directly because of the dependency of non-deferrable demand and net-renewable on the price.

A simple modification prevails, fortunately. In particular, we assume that at the beginning of every interval, each member reports its non-deferrable demand consumption levels for three zones with the current renewable generation level. This does not require the coordinator to know members private utility function.  Given this information, the coordinator applies the pricing rule $\tilde \chi_{\text{TPR}}$:
\begin{equation}
	\pmb \psi_t = \tilde\chi_{\text{TPR}}(\pmb x_t) =
	\begin{cases}
		\pmb \pi^+, & \textstyle\sum_{i\in[N]} g_{i,t}(\pmb \pi^+) < m_t(\pmb d_t, \pmb\tau_t),\\[2pt]
		\pmb \pi^-, & \textstyle\sum_{i\in[N]} g_{i,t}(\pmb \pi^-) > M_t(\pmb d_t),\\[2pt]
		\pmb \pi,   & \text{otherwise}.
	\end{cases}
	\label{eq:elastic-tpr}
\end{equation}
Since the community price is exogenous to each member and Proposition~\ref{prop1} holds on a realization basis for the surplus renewable generation level, its result carries over to household's optimal deferrable demand decision. For $c_{i,t}^{\text{TPR}} = \mu_i^{\chi_{\text{TPR}}} (\pmb x_{i,t}, \pmb \psi_t)$ and optimal non-deferrable demand level $p_{i,t}^*$, define member $i$'s optimal surplus under $\tilde\chi_{\text{TPR}}$ as
\begin{equation}
	S_i^{\tilde\chi_{\text{TPR}}} := \sum_{t=1}^{T} \mathbb{E}\!\left[
	U_{i,t}\big(p_{i,t}^*\big) - P_{\pmb \psi_t}\big(p_{i,t}^* + c_{i,t}^{\text{TPR}}- r_{i,t}\big)
	\right].
	\label{eq:elastic-surplus}
\end{equation}
Let $S_i^{\mathrm{NEM}}$ denote the optimal surplus of the stand-alone customer with default utility. With individual surplus as defined above, $\tilde\chi_{\text{TPR}}$ is admissible: satisfies individual rationality in terms of surplus and satisfies revenue adequacy.

\begin{proposition}[Feasibility of $\tilde\chi_{\text{\normalfont TPR}}$]
	\label{thm:surplus-ir}
	Under $\tilde\chi_{\text{\normalfont TPR}}$, every community member's optimal surplus is no less than its stand-alone optimal surplus:
	$S_i^{\tilde\chi_{\text{\normalfont TPR}}} \ge S_i^{\mathrm{NEM}}$ for all $i$. Further, $\tilde\chi_{\text{\normalfont TPR}}$ satisfies revenue adequacy for all $t$.
\end{proposition}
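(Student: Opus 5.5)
The plan is to prove the two claims separately, reusing the arguments behind Proposition~\ref{prop:IR} and Proposition~\ref{prop:RA}. The key device is a pathwise (per-realization, per-interval) comparison. Fix a realization of the exogenous processes $(r_{i,t})$, the EV arrivals and the marginal-utility functions. Let member $i$ use its stand-alone NEM-optimal policy: charging $\tilde\mu_{i,t}$ and non-deferrable consumption $p^{\mathrm{NEM}}_{i,t}=p_{i,t}(\pmb\pi,r_{i,t})$ from (\ref{eq:elastic-nd-threshold}).

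For individual rationality, the argument rests on one fact. Because $\pi^-\le\pi^+$, the linear payment functions $P_{\pmb\pi^+}$ and $P_{\pmb\pi^-}$ lie in opposite directions relative to $P_{\pmb\pi}$ depending on the sign of $z$. Concretely, $P_{\pmb\pi^-}(z)\le P_{\pmb\pi}(z)$ for all $z$. Also $P_{\pmb\pi^+}(z)\le P_{\pmb\pi}(z)$ for $z\le 0$, with equality for $z>0$. In other words, $P_{\pmb\pi^+}$ is never worse for a net producer, and $P_{\pmb\pi^-}$ is never worse for anyone.

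The comparison then runs as follows.
\begin{enumerate}
\item In each interval where $\tilde\chi_{\text{TPR}}$ broadcasts $\pmb\pi$, the member's problem coincides with the stand-alone problem.
\item Where $\pmb\pi^-$ is broadcast, the member can replicate the stand-alone action $(\tilde\mu_{i,t},p^{\mathrm{NEM}}_{i,t})$ and pay weakly less.
\item Where $\pmb\pi^+$ is broadcast, replicating the NEM action costs weakly less if $z_{i,t}\le0$ and exactly the same if $z_{i,t}>0$.
\end{enumerate}
Hence the NEM policy, played inside the community, yields a surplus of at least $S_i^{\mathrm{NEM}}$ along every sample path. Here one has to verify that this mimicking policy is admissible for the household's problem. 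It uses only $(\pmb x_{i,t},\pmb\psi_t)$, the EV state transitions (\ref{eq2}) do not depend on prices, and the utility term $U_{i,t}(p_{i,t})$ is identical in both settings. Taking expectations and then optimizing over the household's policies (deferrable by Proposition~\ref{prop1}, non-deferrable by the per-interval problem (\ref{eq:elastic-nd})) gives $S_i^{\tilde\chi_{\text{TPR}}}\ge S_i^{\mathrm{NEM}}$.

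The main obstacle is that the broadcast price now depends on other members' reports and on the price-dependent $g_{i,t}(\pmb\psi)$. I must make sure this causes no problem, and the price-taking assumption A3 is what handles it. The price process is exogenous to member $i$, so the comparison is legitimate realization by realization no matter how $\pmb\psi_t$ was generated. In particular, I do not need the broadcast price to be consistent with the realized zone.

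Revenue adequacy is a per-interval statement, and the plan mirrors the proof of Proposition~\ref{prop:RA}, which never used how the zone was selected. Write $z_{i,t}=p^*_{i,t}+c^{\text{TPR}}_{i,t}-r_{i,t}$.
\begin{enumerate}
\item If $\pmb\psi_t=\pmb\pi$, internal payments are $\sum_i P_{\pmb\pi}(z_{i,t})$. Convexity of $P_{\pmb\pi}$ together with positive homogeneity, i.e.\ subadditivity, gives $\sum_i P_{\pmb\pi}(z_{i,t})\ge P_{\pmb\pi}(\sum_i z_{i,t})$. This is the $Q^+,Q^-$ surplus computation.
\item If $\pmb\psi_t=\pmb\pi^+$, internal revenue is $\pi^+\sum_i z_{i,t}$, and $\pi^+ z\ge P_{\pmb\pi}(z)$ holds for every $z$.
\item If $\pmb\psi_t=\pmb\pi^-$, internal revenue is $\pi^- z_t$. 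Here I need $z_t\le0$, and this is where the modified thresholds enter. Under $\pmb\pi^-$, Proposition~\ref{prop1} gives $c^{\text{TPR}}_{i,t}=M_{i,t}$, and the members consume $p_{i,t}(\pmb\pi^-)$. So $z_t=M_t-\sum_i g_{i,t}(\pmb\pi^-)<0$ by the defining condition in (\ref{eq:elastic-tpr}), and then $P_{\pmb\pi}(z_t)=\pi^-z_t$.
\end{enumerate}
This is why members report their zone-wise consumption $g_{i,t}(\pmb\psi)$: it makes the realized aggregate net consumption match the zone used to set the price.
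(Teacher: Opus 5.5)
Your mimicking argument for individual rationality is sound, but the revenue-adequacy case $\pmb\psi_t=\pmb\pi^+$ rests on a false inequality. You assert $\pi^+z\ge P_{\pmb\pi}(z)$ for every $z$. In fact $P_{\pmb\pi}(z)=\max\{\pi^+z,\pi^-z\}$, so $\pi^+z\le P_{\pmb\pi}(z)$ for all $z$, strictly when $z<0$. This is exactly the fact $P_{\pmb\pi^+}\le P_{\pmb\pi}$ that you used for individual rationality. If $\pmb\pi^+$ were broadcast while the community is a net exporter, the coordinator would credit members $\pi^+$ per kWh of net export but receive only $\pi^-$ from the utility, and revenue adequacy would fail. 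This zone therefore needs the same sign check you did for $\pmb\pi^-$. Under $\pmb\pi^+$, Proposition~\ref{prop1} gives $c^{\text{TPR}}_{i,t}=m_{i,t}$ and \eqref{eq:elastic-nd} gives $p_{i,t}(\pmb\pi^+)$. So $z_t=m_t(\pmb d_t,\pmb\tau_t)-\sum_{i}g_{i,t}(\pmb\pi^+)>0$ by the first line of \eqref{eq:elastic-tpr}, and $P_{\pmb\pi}(z_t)=\pi^+z_t$ equals the internal revenue. This is how the paper argues (the community is importing under $\pmb\pi^+$). It also shows that the reported zone-wise net renewables are needed in both outer zones, not only under $\pmb\pi^-$.

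A smaller point on individual rationality. In the closing step you optimize over household policies using Proposition~\ref{prop1} for charging and \eqref{eq:elastic-nd} for consumption. This treats the pair $(c^{\text{TPR}}_{i,t},p^*_{i,t})$ as a joint optimizer. Under $\pmb\psi_t=\pmb\pi$, however, the kinked payment $P_{\pmb\pi}(p+c-r)$ couples the two decisions, so this pair need not be jointly optimal. For example, an urgent EV with $g_{i,t}=0$ and $U_{i,t}'(r_{i,t})<\pi^+$ would rather cut $p_{i,t}$ below $r_{i,t}$. Since $S_i^{\tilde\chi_{\text{TPR}}}$ in \eqref{eq:elastic-surplus} is the surplus of this specific pair, you should compare it with your mimicking policy directly, as the paper's second inequality intends:
\begin{itemize}
\item On $\pmb\pi$ intervals both policies use $p_{i,t}(\pmb\pi,r_{i,t})$, so their net renewables coincide.
\item On $\pmb\pi^{\pm}$ intervals the payment is linear and separates, so \eqref{eq:elastic-nd} handles consumption.
\item The charges $m_{i,t}$, $M_{i,t}$ on $\pmb\pi^{\pm}$ intervals do not depend on $g_{i,t}$. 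Hence the TPR charging sequence is exactly the Proposition~\ref{prop1} response to the mimic's net-renewable sequence, and it dominates the mimic's charging pathwise.
\end{itemize}
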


The proof is in \cite{jeon2026optimal}. Asymptotic optimality likewise extends to $\tilde \chi_{\text{\normalfont{TPR}}}$ with respect to community surplus and the benchmark is the centralized welfare optimum. The light traffic condition is modified to $\alpha < \theta_g(\pim)/(\bar c \bar T)$, where $\theta_g(\pim) := \mathbb E[r_{i,t} - p_{i,t}(\pim)]$ for homogeneous households; see \cite{jeon2026optimal}.

%% file: sections/Numerical_sim_v8.tex
In this section, we present numerical simulations of the proposed distributed scheduling algorithm. We first describe the simulation setting in Section~\ref{sec:simulation_setting}, followed by a numerical demonstration of asymptotic optimality and comparisons of community and individual cost  savings.

\subsection{Simulation setting }\label{sec:simulation_setting}
We considered an energy community with $N$ members; each had a BTM solar generator and an EV charger. The scheduling horizon was 24 hours with 15-minute control intervals.

\subsubsection{Evaluation datasets}
Simulations used real-world and synthetic data. EV charging demand was drawn from the Adaptive Charging Network dataset at Caltech \cite{lee_acndata_2019}. Renewable generations used i.i.d. lognormal profiles to demonstrate asymptotic optimality.
For other studies, real-world BTM solar profiles from Pecan Street's New York residential customers were used \cite{pecanstreetdata}.

\subsubsection{EV demand}
EV arrivals followed a homogeneous Bernoulli process; charging durations were drawn from a truncated Gaussian with mean 5 hours; the charger's maximum rate was 7.2 kW, which is typical for Level-2 chargers.

\subsubsection{Energy supply and non-deferrable demand}
The coordinator either purchased from the grid or used renewables to meet community demand, selling any surplus to the grid when no EVs were present. We set $p_{i,t} = 0$ to demonstrate the benefit of renewable sharing in EV charging. The NEM rates were $(\pi^+, \pi^-) = (0.5$ \$/kWh, $0.2$ \$/kWh$)$.

\subsection{Asymptotic approximation of optimality } This experiment aimed to validate the asymptotic optimality established in Theorem~\ref{thrm:asymptoticTPR}. Since computing the optimal centralized scheduling was intractable, we compared TPR against the  Oracle policy that solved an open-loop optimization with perfect knowledge of all random variables, which served as a lower bound on the cost of the optimal policy. In addition, we also compared two suboptimal centralized scheduling: least-laxity-first (LLF) and model predictive control (MPC).

Fig~\ref{fig:asymptoticopt} shows the gap of the average community cost per household relative to the Oracle policy on a log scale. The gap of the distributed scheduling based TPR exhibited exponential decay to zero as the community size grew, which indicates that the convergence shown in Theorem~\ref{thrm:asymptoticTPR} could be exponential. In contrast, the optimality gaps of LLF and MPC persisted as the community size increased. It is remarkable that the simple TPR-based distributed scheduling significantly outperformed far more complex centralized scheduling in both scheduling and computational costs. 

\begin{figure}[t]
 \centering
 \includegraphics[width=0.8\columnwidth]{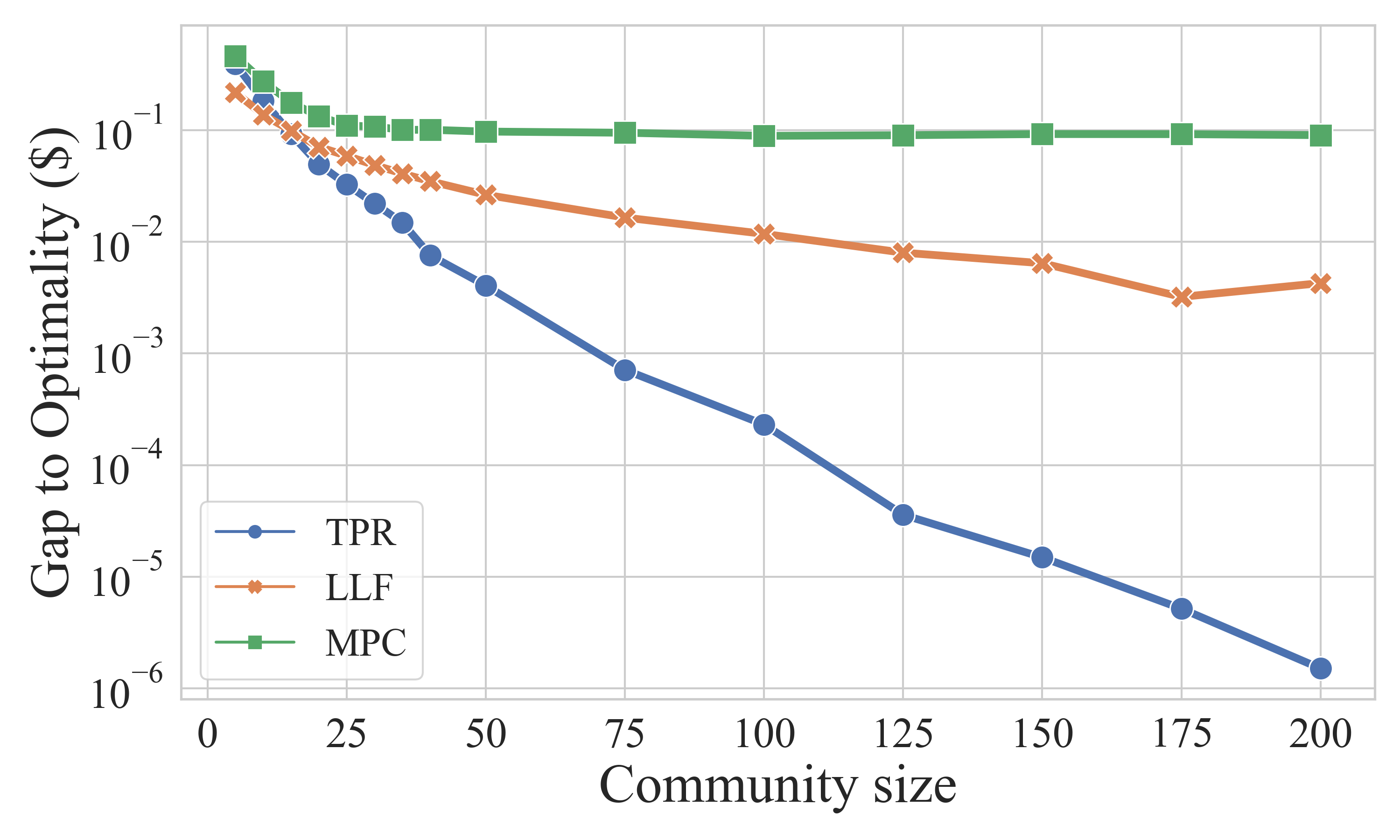}
 \caption{Optimality gap of centralized and distributed policies with $y$ axis plotted on a log scale. Individual renewable generation was drawn from a log-normal distribution with mean $\theta_g = 1.6$ kWh, i.i.d. across members and time. The EV arrival at an empty charger followed a Bernoulli process with probability $\alpha = \theta_g / (\bar c \bar T)$, with a maximum charging duration of 6 hours.}
 \label{fig:asymptoticopt}
\end{figure}

\subsection{Individual cost saving comparisons}
We evaluated individual cost saving inside under TPR over outside the community under NEM, validating the individual rationality of TPR. In addition, we evaluated alternative price-based distributed scheduling derived from the pricing rule by  proposed in  \cite{chakraborty2018analysis}, herein referred to as NEM {\it ex post} (NEM-exp). Specifically, NEM-exp is a pricing policy based on realized community net-consumption $z_t=\sum_{i \in [N]} z_{i,t}$ applied to realized individual net-consumption $z_{it}$ with the payment function
\begin{equation}\label{eq:chak_pricingrule}
 P_{\mbox{\tiny NEM-exp}} (z_{i,t}) = \begin{cases}
  \pi^+ z_{i,t}, & \sum_{j \in [N]} z_{j,t} > 0 \\
  \pi^- z_{i,t}, & \sum_{j \in [N]} z_{j,t} \le 0,
 \end{cases}
\end{equation}
To apply NEM-exp for distributed scheduling, we need a mechanism to obtain first the individual scheduling $z_{i,t}$, and compute payment based on the community level net-consumption. A simple strategy that also satisfies individual rationality and revenue adequacy is using NEM to schedule consumption and NEM-exp to price the consumption.

\begin{figure}[h]
 \centering
 \includegraphics[width=\columnwidth]{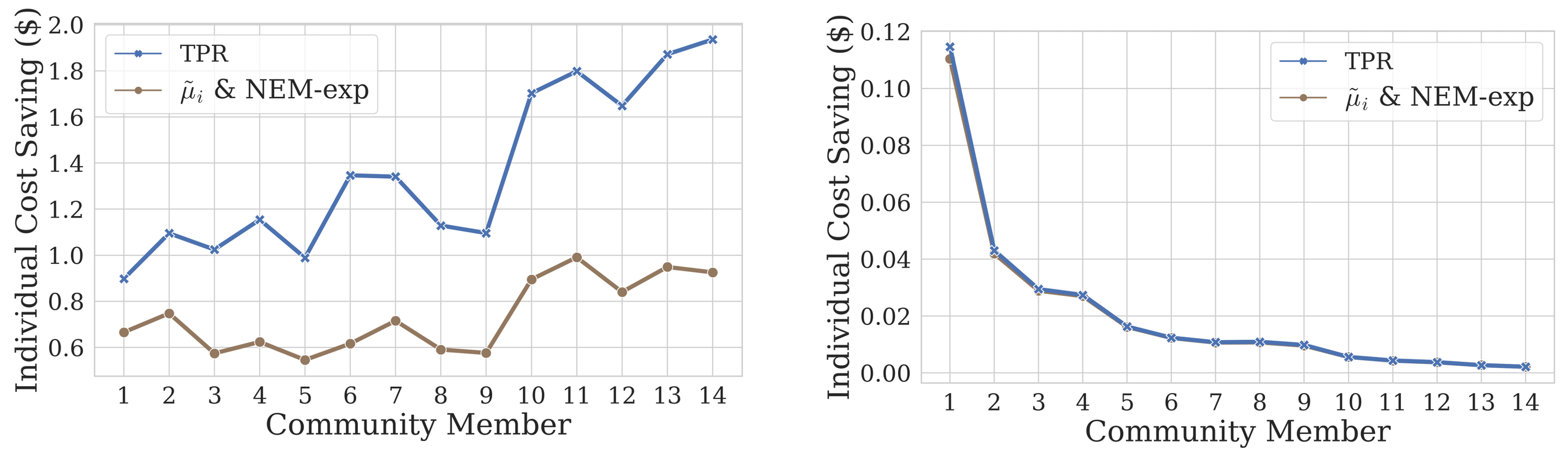}
 \caption{Individual cost saving under private rooftop solar (left) and shared community solar with fixed shares (right)}
 \label{fig:indcostsaving}
\end{figure}

We compared individual cost savings for $N = 14$ under two renewable configurations: private rooftop solar (Fig.~\ref{fig:indcostsaving} left with independent renewables, ) and shared community solar (Fig.~\ref{fig:indcostsaving} right with fixed shares of the common realization). Members were indexed in descending order of renewable generation capacity, with member 1 holding the largest share.

Under privately owned renewables, TPR achieved 33.74\%–67.58\% greater savings across members than $\tilde \mu_i$ with (\ref{eq:chak_pricingrule}). Decoupled renewable generation creates more opportunities to benefits from TPR, as individual net consumptions are more likely to have the opposite sign to the community net consumption, a scenario where households benefit by TPR. Savings were larger for members with small generation capacity (10–14), who rely more on purchased energy and thus benefited most from buying shared renewables at $\pim$ rather than importing at $\pip$. 

Under shared central renewables, TPR still outperformed (\ref{eq:chak_pricingrule}) for every member, but only by 0.38\%–3.83\%. Overall savings were smaller because coupled generation limits opportunities to exploit price differentials: surplus and deficit tend to occur simultaneously across members, so the two scenarios that benefit TPR, depicted in Fig.~\ref{fig:IR}, occurred less frequently.

%% file: sections/Conclusion_v8.tex
This paper develops a pricing mechanism for distributed EV scheduling within energy communities, formulated as a bilevel stochastic dynamic program. The main contribution is the threshold pricing rule (TPR), derived from the structure of the optimal centralized policy, which translates the value of aggregate renewables into simple price signals that members respond to individually. We prove that TPR satisfies individual rationality and revenue adequacy, and that the community cost under TPR converges to the centralized optimum as the community grows under light-traffic conditions. We further extend TPR to price-elastic, non-deferrable demand, showing that the same threshold structure preserves individual rationality with respect to household surplus.

A possible generalization of our result is to relax the price-taker assumption, which could open the door to no-regret online learning algorithms that adapt prices to members' strategic behavior. Another direction is to relax the assumption of truthful deadline revelation. Designing delay-differentiated pricing mechanisms to enforce truthful revelation is an interesting direction for future study. 

%% file: sections/appendixA_proof_v7.tex
Following lemma is used in the proofs. 

\begin{lemma}[No incompletion]\label{lem:completion}
	Suppose A1--A2 hold and each interval payment is convex piecewise linear with slopes in $[\pi^-,\pi^+]$ (true of $P_{\bm\pi}$ and of 	$P_{\bm\psi_t}$ under any TPR price). Then every optimal charging sequence---centralized or individual---serves each job fully by its
	deadline. 
\end{lemma}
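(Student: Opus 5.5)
The plan is an exchange (interchange) argument. We show that any charging sequence leaving positive remaining demand at some job's deadline can be modified into a feasible sequence with strictly smaller expected cost. Assumption A2 ($D_i \le T_i \bar c$) guarantees that for every job there is slack capacity somewhere in its window whenever it ends with $d>0$.

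Fix an optimal policy, centralized or individual, and a sample path on which job $j$ reaches its deadline with remaining demand $\delta := d_{j,t} - c_{j,t} > 0$ at $\tau_{j,t}=1$. The job's total charged energy is $D_j - \delta < D_j \le T_j \bar c$. Hence there is at least one interval $s$ in the job's window with $c_{j,s} < \bar c$; if some earlier interval had $d_{j,s} < \bar c$ binding, then the final interval itself has room. Let $s$ be such an interval and $\epsilon := \min\{\delta,\ \bar c - c_{j,s}\} > 0$.

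We then increase $c_{j,s}$ by $\epsilon$. Remaining demands in later intervals of the window decrease by $\epsilon$, so the bounds $c_{j,u} \le \min\{d_{j,u}, \bar c\}$ must be rechecked. To keep every later action feasible, we simply keep later actions fixed. Since the leftover was $\delta \ge \epsilon$, every later $d_{j,u}$ stays at least $c_{j,u}$, and the terminal leftover becomes $\delta - \epsilon \ge 0$.

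The modification can be made measurable with respect to the state at time $s$: it is a policy change on the event in question, with the later history unchanged because arrivals of new jobs only occur after completion. The arrival-timing issue is the one place needing care. If the job now finishes earlier, a new arrival could be triggered earlier. I would avoid this by choosing $\epsilon<\delta$ strictly when needed, so the job still occupies the charger until its deadline and the state evolution is unchanged except for $d_{j,\cdot}$.

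Next, compare the costs. At interval $s$ the net consumption of the payer, household $i$ or the community, increases by $\epsilon$. Because the payment is convex piecewise linear with slopes in $[\pi^-,\pi^+]$, this adds at most $\pi^+\epsilon$. At the deadline the penalty changes from $q(\delta)$ to $q(\delta-\epsilon)$. By convexity of $q$ and A1,
\begin{equation*}
q(\delta)-q(\delta-\epsilon) \ge q'(0)\,\epsilon > \pi^+\epsilon .
\end{equation*}
This uses that a convex function's difference quotient on $[\delta-\epsilon,\delta]$ is at least the right derivative at $0$. The net change is therefore strictly negative on that path.

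If the event of incompletion has positive probability under the optimal policy, applying the modification on that event strictly lowers expected cost, contradicting optimality. Events of probability zero are irrelevant, giving full service almost surely. The argument applies verbatim to the individual problem under $P_{\bm\pi}$ or under any TPR price $P_{\bm\psi_t}$, since only the slope bound is used, and to the centralized problem with the community payment $P_{\bm\pi}(z_t)$.

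The main obstacle is the feasibility and measurability of the shifted action across the stochastic dynamics, in particular ensuring the perturbation does not alter future arrival timing or other households' states. I plan to handle it with the $\epsilon<\delta$ choice and by applying the shift within the job's window only.
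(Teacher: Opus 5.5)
\textbf{There is a genuine gap: the perturbation is anticipative.} Your cost comparison matches the paper's: the payment rises by at most $\pi^+\epsilon$ and the penalty falls by at least $q'(0)\epsilon>\pi^+\epsilon$. The problem is that the modified charging sequence is not an admissible policy. The interval $s$, the amount $\epsilon=\min\{\delta,\bar c-c_{j,s}\}$, and the event ``job $j$ ends with leftover $\delta>0$'' are all read off the realized path. But $\delta$ is revealed only at the deadline, after future renewables, prices and (in the centralized case) other households' states have been drawn. A policy cannot raise $c_{j,s}$ at time $s$ on an event it does not yet observe.

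Suppose instead you raise $c_{j,s}$ on an event observable at $s$. On continuations where the job would have finished anyway, the fixed later actions now overcharge the remaining demand. Repairing them turns the step into an ordinary intertemporal shift, which can cost $\pi^+\epsilon$ now and save only $\pi^-\epsilon$ later, so the strict improvement is lost. An anticipative improvement contradicts only pathwise optimality. The lemma, however, is about optimality over adapted policies $\nu_t(\bm x_t)$ or $\mu_{i,t}(\bm x_{i,t},\bm\psi_t)$.

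Your $\epsilon<\delta$ device does not address this. It targets arrival timing, which is a non-issue anyway: the charger stays occupied until $\tau$ runs out regardless of when $d$ reaches zero. The paper uses exactly this exogeneity of occupancy in the proof of Theorem~2.

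\textbf{The missing idea is to trigger the modification with the urgency threshold} $m_{j,s}=\max\{d_{j,s}-(\tau_{j,s}-1)\bar c,0\}$, as the paper does.
\begin{itemize}
\item The condition $c_{j,s}<m_{j,s}$ depends only on the current state and action.
\item Since at most $(\tau_{j,s}-1)\bar c$ can be charged afterwards, this condition forces a shortfall of at least $m_{j,s}-c_{j,s}$ on \emph{every} continuation.
\item So raise $c_{j,s}$ to $m_{j,s}$ at the first interval where it is undercut; there A2 and induction give $m_{j,s}\le\bar c$, so the raise is feasible. Then replay the old later actions. They stay feasible, and the cost drops on every continuation. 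Hence no optimal policy charges below $m$ at a reachable state.
\item Completion follows: $c_{j,k}\ge m_{j,k}$ in every interval gives $c_{j,k}\ge d_{j,k}$ at $\tau_{j,k}=1$.
\end{itemize}
Your counting argument fits in as the contrapositive. On an incomplete path, take the last interval of the window with $c_{j,s}<\bar c$; every later interval charges $\bar c$. There $m_{j,s}=c_{j,s}+\delta>c_{j,s}$, so the observable trigger does fire whenever incompletion occurs.
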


\begin{proof}
	We prove that any policy with $c_{i,t}<m_{i,t}(d_{i,t},\tau_{i,t})$ at some state that can finish the charging is strictly improvable by an adapted 	modification. Then, the optimality of the completion follows. The argument is given for a single member facing prices with slopes in $[\pi^-,\pi^+]$ but the centralized community case is identical, for member by member, since raising $c_{i,t}$ raises $z_t$ and $P_{\bm\pi}$ obeys the same slope bounds.
	
	Suppose at interval $t$ the policy takes $c_{i,t}<m_{i,t}$, so $m_{i,t}>0$ and let $\epsilon:=m_{i,t}-c_{i,t}\in(0,\bar c]$. Consider a policy that modifies the action at this state only by charging $c_{i,t}+\epsilon$ at $t$ and leave all subsequent actions unchanged. Since future charging is at most $(\tau_{i,t}-1)\bar c$ and $m_{i,t}=d_{i,t}-(\tau_{i,t}-1)\bar c$, on every sample path the original policy ends the job with shortfall	$\delta \ge m_{i,t}-c_{i,t}=\epsilon$. On the other hand, the modified policy ends with shortfall $\delta - \epsilon \ge 0$. The payment increase at $t$ is at most  $\pi^+\epsilon$ by the slope bound. By convexity and A1, the penalty decreases by $q(\delta)-q(\delta-\epsilon)\ \ge\ q(\epsilon)-q(0)\ \ge\
	q'(0)\,\epsilon\ >\ \pi^+\epsilon$. The modification therefore strictly lowers cost on every path, so no optimal policy charges below $m_{i,t}$ at any reachable state. Finally, accompanied with A2, $c_{i,k} \ge m_{i,k}$ for all $k$ implies completion of the job by its deadline. 
\end{proof}

\subsection{Proof of Theorem 1}
By Lemma~\ref{lem:completion}, restrict to completing policies, so the stage cost is $P_{\bm\pi}(z_t)$. Fix $t$ and any completing policy
$\nu$. We show that $\nu$ can be modified to take the form (\ref{eq:cen_opt}) at interval $t$, without increasing the path-wise cost and without changing its actions before $t$. Applying these modifications at $t = 1, \ldots, T$ transforms $\nu$ into policy (\ref{eq:cen_opt}), proving its optimality. 

Fix a sample path and let $\bm c=\nu_t(\bm x_t)$, $z_t=\sum_i c_{i,t}-g_t$. Every modification below moves member $i$'s charge at $t$ by $\epsilon>0$ and offsets it at the earliest feasible interval(s) of the same job window $\{t+1,\dots,t+\tau_{i,t}-1\}$. This rule is adapted, and since both sequences complete the job, the state at the deadline is unchanged. Since $P_{\bm\pi}$ is convex piecewise linear with slopes
$\pi^-\le\pi^+$,
\begin{equation}
	\pi^-\epsilon \le P_{\bm\pi}(z+\epsilon)-P_{\bm\pi}(z)\le
	\pi^+\epsilon,\quad \forall z\in\mathbb R,\ \epsilon>0. \label{eq:21}
\end{equation}

\emph{Case 1 ($g_t\le m_t$).} Completion forces $c_{i,t}\ge m_{i,t}$ for all $i$, hence $z_t\ge m_t-g_t\ge0$. If $c_{i,t}>m_{i,t}$ for some
$i$, then $c_{i,t}>d_{i,t}-(\tau_{i,t}-1)\bar c$, so a future offset of $+\epsilon$ is feasible. Lowering $c_{i,t}$ keeps $z_t\ge0$ and saves
exactly $\pi^+\epsilon$ at $t$, while by (\ref{eq:21}) the offset raises future cost by at most $\pi^+\epsilon$. Iterating over members drives $\bm c_t$ to $(m_{1,t},\dots,m_{N,t})$ without increasing the cost, while the sign of $z_t$ is preserved at every step, so each exact-slope claim remains valid.

\emph{Case 2 ($g_t> M_t$).} Here $c_{i,t}\le M_{i,t}$ gives $z_t\le0$. If $c_{i,t}<M_{i,t}=\min\{d_{i,t},\bar c\}$, the job is unfinished at $t$, so $\nu$ charges a positive amount later; borrowing $\epsilon$ from the earliest such interval is feasible. Raising $c_{i,t}$ keeps $z_t\le0$, costing exactly $\pi^-\epsilon$ at $t$, while the future reduction saves at least $\pi^-\epsilon$ by (\ref{eq:21}). Driving $\bm c_t$ to $(M_{1,t},\dots,M_{N,t})$ does not increase cost.

\emph{Case 3 ($m_t< g_t\le M_t$).} If $z_t>0$, then $\sum_j c_{j,t}>g_t>m_t$, so some $i$ has $c_{i,t}>m_{i,t}$. Lowering it as in Case~1 (keeping $z_t\ge0$) saves exactly $\pi^+\epsilon$ against at most $\pi^+\epsilon$ later. If $z_t<0$, then $\sum_j c_{j,t}<g_t\le M_t$, so some $i$ has $c_{i,t}<M_{i,t}$. Raising it as in Case~2 (keeping $z_t\le0$) costs exactly $\pi^-\epsilon$ against a future saving of at least $\pi^-\epsilon$. Iterating drives $z_t$ to $0$, and the resulting allocation is feasible since $m_t<g_t\le M_t$.

Each inequality holds path-wise, hence in expectation, and all modifications are adapted, so (\ref{eq:cen_opt}) is an optimal policy. 
\subsection{Proof of Proposition 1}
Fix member $i$ and let $c_{i,t}^* := \mu_{i}^{\chitpr}(\pmb x_{i,t}, \pmb \psi_t)$ denote the action in the proposition. By A3, $\{\bm\psi_k\}$ is exogenous to $i$, so it suffices to show that, for \emph{every} realization of $\{g_{i,k},\bm\psi_k\}$, the $\bm c_{i}^*$ minimizes the realized cost of the current job over all feasible charging sequences. Path-wise optimality implies optimality in expectation over all adapted policies, and successive jobs decouple by Lemma~\ref{lem:completion}, so per-job optimality proves the proposition.

By Lemma~\ref{lem:completion}, restrict to completing sequences: $\sum_k c_{i,k}=d_{i,t}$, $c_{i,k}\in[0,\min\{d_{i,k},\bar c\}]$, with
realized cost $J(\bm c_i)=\sum_k P_{\bm\psi_k}(c_{i,k}-g_{i,k})$. Each $P_{\bm\psi_k}$ satisfies (\ref{eq:21}). Let $\bm c_i$ be any completing sequence. We modify $\bm c_i$ so that its $t$-th entry equals $c_{i,t}^*$, without increasing the realized cost $J$. Repeating these modifications at $t+1, t+2, \ldots$ transforms $\bm c_i$ into the $\bm c_i^*$, showing that $\bm c_i^*$ costs no more than $\bm c_i$. 

\emph{Case 1 ($\bm\psi_t=\bm\pi^+$).} Completion forces $c_{i,t}\ge m_{i,t}$. If $c_{i,t}>m_{i,t}$, move $\epsilon$ to the future (feasible as in Theorem~1): $P_{\bm\pi^+}$ has slope $\pi^+$ everywhere, so the saving at $t$ is exactly $\pi^+\epsilon$ and the future increase is at most $\pi^+\epsilon$ by (\ref{eq:21}).

\emph{Case 2 ($\bm\psi_t=\bm\pi^-$).} If $c_{i,t}<M_{i,t}$, completion implies a positive future charge and borrowing $\epsilon$ from it costs
exactly $\pi^-\epsilon$ at $t$ and saves at least $\pi^-\epsilon$ later.

\emph{Case 3 ($\bm\psi_t=\bm\pi$).} The $c_{i,t}^*$ is $\tilde\mu_{i,t}(\bm x_{i,t})$ of (\ref{eq:optstandalone}). If $c_{i,t}>\tilde\mu_{i,t}$, then $c_{i,t}>\max\{g_{i,t},m_{i,t}\}$, so $z_{i,t}>0$ and the local slope at $t$ is $\pi^+$: lower $c_{i,t}$ as in Case~1. If $c_{i,t}<\tilde\mu_{i,t}$, then $c_{i,t}<\min\{g_{i,t},M_{i,t}\}$, so $z_{i,t}<0$ and the local slope is $\pi^-$: raise $c_{i,t}$ as in Case~2. Either case preserves the sign of $z_{i,t}$ until the target is reached, so no step increases $J$.

Hence every completing sequence is dominated by $\mu_i^{\chi_{\text{TPR}}}$ on every realization, proving the proposition. \hfill$\square$

\subsection{Proof of Proposition~\ref{prop:IR}}
Individual rationality of $\chitpr$ is proved on a realization basis. Given a realization of renewables and EV arrivals, community charging prices over the scheduling horizon $(\pmb \psi_1^{\chitpr}, \ldots, \pmb \psi_T^{\chitpr})$ are uniquely determined. Let $\pmb c_i = (c_{i,1}, \ldots, c_{i, T})$ be member $i$'s optimal action sequence under TPR, and $\pmb c_i' = (c_{i,1}', \ldots, c_{i,T}')$ be the optimal stand-alone actions under NEM. 

We first show that the total charging cost of the optimal stand-alone action $\pmb c_i'$ under TPR is no greater than that under NEM for every $z_{i,t}' = c_{i,t}' - g_{i,t}$:
\begin{equation} \label{eq:IR_step1}
	\sum_{t=1}^T P_{\pmb \psi_t^{\chitpr}}(z_{i,t}' ) \le \sum_{t=1}^T P_{\pmb \pi}(z_{i,t}').
\end{equation}
For each interval $t$, there are three possible values of $\pmb \psi_t$ under TPR:
\begin{itemize}
	\item \textbf{Case 1} ($\pmb \psi_t = \pmb \pi^+ = (\pip, \pip)$, net consuming). If $c_{i,t}' > g_{i,t}$, $P_{\pmb \pi}  (z_{i,t}')  = P_{\pmb \psi_t^{\chitpr}}(z_{i,t}' )= \pip z_{i,t}'$ and if $c_{i,t}' \le g_{i,t}$, $P_{\pmb \pi}(z_{i,t}') = \pim z_{i,t}' >  P_{\pmb \psi_t^{\chitpr}}(z_{i,t}' )= \pip z_{i,t}'$. 
	\item \textbf{Case 2} ($\pmb \psi_t = \pmb \pi^- = (\pim, \pim)$, net producing). If $c_{i,t}' \le g_{i,t}$, $P_{\pmb \pi}  (z_{i,t}')  =P_{\pmb \psi_t^{\chitpr}}(z_{i,t}' ) = \pim z_{i,t}'$ and if $c_{i,t}' > g_{i,t}$, $P_{\pmb \pi}(z_{i,t}') = \pip z_{i,t}' >  P_{\pmb \psi_t^{\chitpr}}(z_{i,t}' ) = \pim z_{i,t}'$.
	\item \textbf{Case 3} ($\pmb \psi_t = \pmb \pi$). TPR and NEM are identical. 
\end{itemize}
This implies that (\ref{eq:IR_step1}) holds at every $t$ because TPR offers a price at least as favorable as NEM in every case. 

By Proposition~\ref{prop1}, $\pmb c_i'$ is suboptimal under TPR for every realization of $g_{i,t}$ and $\pmb \psi_t$, and therefore the following inequality holds for $z_{i,t} = c_{i,t} - g_{i,t}$:
\begin{equation}\label{eq:IR_step2}
	\sum_{t=1}^T P_{\pmb \psi_t^{\chitpr}}(z_{i,t} )\le \sum_{t=1}^T P_{\pmb \psi_t^{\chitpr}}(z_{i,t}' ).
\end{equation}
Combining (\ref{eq:IR_step1}) and (\ref{eq:IR_step2}), TPR satisfies individual rationality. 

\subsection{Proof of Theorem~2}
\label{app:thm-asymptotic}

Throughout the proof, both centralized optimal scheduling and the TPR scheduling are driven by the same realization of the primitive randomness, namely the arrival randomness, the duration--demand draws $(D_i, T_i)$, and the net renewables $\{g_{i,t}\}$. To make the arrival randomness explicit, associate with each household an i.i.d.\ coin sequence $\xi_{i,t} \sim \mathrm{Bernoulli}(\alpha)$, independent across households and intervals; an EV arrives at charger $i$ at the beginning of interval $t$ if and only if the charger is unoccupied and $\xi_{i,t} = 1$.

We first record two consequences of the arrival model. Define the
occupancy indicator and the occupancy count
\[
s_{i,t} := \mathbbm{1}\big\{ (d_{i,t}, \tau_{i,t}) \neq (0,0) \big\},
\qquad
I_N(t) := \sum_{i\in[N]} s_{i,t}.
\]
By the state dynamics, an EV arriving with duration $T_i$ occupies the charger for exactly $T_i$ intervals irrespective of the charging decisions, since $\tau_{i,t}$ decrements
deterministically and the departure occurs at the deadline. Hence occupancy is \emph{exogenous}: $s_{i,t}$ is a function of $\{\xi_{i,s}\}_{s \le t}$ and the duration draws alone, so that, under the coupling, the arrival times, the durations, the indicators
$s_{i,t}$, and the count $I_N(t)$ coincide on every sample path under
the two policies. Moreover, by the homogeneity of the arrival model,
$\{s_{i,t}\}_{i\in[N]}$ are i.i.d.\ across households with common mean
\begin{equation}
	\beta_t := \Pr(s_{i,t} = 1).
	\label{eq:occupancy-prob}
\end{equation}
Since occupancy at $t$ requires an arrival within the preceding $\min\{t, \mathcal T\}$ intervals, and every arrival requires the success of the exogenous coin in that interval,
\begin{equation}
	\beta_t \;\le\; 1 - (1-q)^{\min\{t,\mathcal T\}} \;\le\; q\,\mathcal T,
	\qquad \forall t \in [T].
	\label{eq:occupancy-bound}
\end{equation}

For each sample path of arrival and renewables, only the remaining-demand trajectories
$\{d_{i,t}\}$, and therefore the thresholds, may differ across the two policies. We write $\{m^{*}_{i,t}, M^{*}_{i,t}\}$ for the thresholds along the centralized optimal trajectory and
$\{m^{\text{TPR}}_{i,t}, M^{\text{TPR}}_{i,t}\}$ for those along the TPR trajectory; all lie in $[0, \bar{c}]$.

We will use the following policy-independent bound. Under any
charging policy, $d_{i,t} > 0$ only if $s_{i,t} = 1$, and
$0 \le m_{i,t} \le M_{i,t} = \min\{d_{i,t}, \bar{c}\}
\le \bar{c}\, s_{i,t}$. Summing over households,
\begin{equation}
	\sum_{i\in[N]} m_{i,t}
	\;\le\; \sum_{i\in[N]} M_{i,t}
	\;\le\; \bar{c}\, I_N(t),
	\label{eq:threshold-occupancy-bound}
\end{equation}
pathwise, for any policy. 
\begin{proof}
	\emph{Step 0 (Uniform integrable domination).}
	Under either policy, the per-house stage payment satisfies
	
	\begin{equation}	\label{eq:stage-domination}
		\begin{aligned}
			&\big|\frac 1 N P_{\pmb \pi}(z_t)\big| \le \pi^+ \frac 1 N |z_t| \\
			\le&  \pip \frac 1 N \sum_i (|c_{i,t}| + |g_{i,t}|) \le \pip (\bar c + \bar g_t^{(N)}),
		\end{aligned}
	\end{equation}
	where $\bar{g}^{(N)}_t := \frac{1}{N}\sum_{i\in[N]} g_{i,t}$. Since
	$\mathbb{E}[g_{i,t}] = \theta_g < \infty$, the right-hand side
	of~\eqref{eq:stage-domination} has finite expectation, uniformly in
	$N$. The deadline penalty is likewise uniformly bounded, since
	$d_{i,t} \le \bar{c}\mathcal T$ almost surely by Assumption~A2. Hence the
	per-house stage cost is dominated by an integrable random variable
	uniformly in $N$, under both policies.
	
	\emph{Step 1 (Zone probabilities).}
	For the centralized optimal trajectory, define the net-consuming,
	net-producing, and net-zero events
	\begin{align*}
		A^{N}_t &:= \Big\{ \sum_{i} m^{*}_{i,t} > g_t \Big\}, \quad
		B^{N}_t := \Big\{ \sum_{i} M^{*}_{i,t} < g_t \Big\} \\
		Z^{N}_t &:= \big( A^{N}_t \cup B^{N}_t \big)^{c},
	\end{align*}
	
	and define $A^{N,\text{TPR}}_t, B^{N,\text{TPR}}_t,
	Z^{N,\text{TPR}}_t$ analogously with the TPR thresholds. Because
	$\{s_{i,t}\}_{i\in[N]}$ are i.i.d.\ Bernoulli with mean $\beta_t$ and the
	renewables $\{g_{i,t}\}_{i\in[N]}$ are i.i.d.\ with mean $\theta_g$,
	the strong law of large numbers gives
	\begin{equation}
		\frac{1}{N} I_N(t) \xrightarrow{\;a.s.\;} \beta_t,
		\qquad
		\bar{g}^{(N)}_t \xrightarrow{\;a.s.\;} \theta_g.
		\label{eq:slln}
	\end{equation}
	By the bound~\eqref{eq:threshold-occupancy-bound}, which holds for
	both trajectories with the \emph{common} occupancy count $I_N(t)$,
	\begin{align*}
		A^{N}_t \cup A^{N,\text{TPR}}_t
		&\subseteq
		\Big\{ \tfrac{\bar{c}}{N} I_N(t) > \bar{g}^{(N)}_t \Big\},
		\\
		\Big\{ \tfrac{\bar{c}}{N} I_N(t) < \bar{g}^{(N)}_t \Big\}
		&\subseteq
		B^{N}_t \cap B^{N,\text{TPR}}_t.
	\end{align*}
	By~\eqref{eq:slln}, \eqref{eq:occupancy-bound}, and the light-traffic
	condition $ q < \theta_g / (\bar c \mathcal T)$
	\[
	\frac{\bar{c}}{N} I_N(t) - \bar{g}^{(N)}_t
	\xrightarrow{\;a.s.\;} \bar{c}\, \beta_t - \theta_g
	\;\le\; \bar{c}\, q \mathcal T - \theta_g \;<\; 0,
	\qquad \forall t \in [T].
	\]
	Therefore, for all $t \in [T]$,
	\begin{equation}
		\begin{aligned}
			&\Pr\!\big(A^{N}_t \cup A^{N,\text{TPR}}_t\big) \to 0,
			\qquad
			\Pr\!\big(B^{N}_t \cap B^{N,\text{TPR}}_t\big) \to 1,
			\\
			&\Pr\!\big(Z^{N}_t\big),\;
			\Pr\!\big(Z^{N,\text{TPR}}_t\big) \to 0.
		\end{aligned}
		\label{eq:zone-limits}
	\end{equation}
	\emph{Step 2 (Limiting per-house costs).}
	By Theorem~\ref{thrm1}, the per-house optimal stage cost
	decomposes over the three zones as
	\begin{equation}
		\begin{aligned}
			\frac 1 N \mathbb E[\omega(\pmb x_t, \pmb c_t^*)] = \Pr(\mathcal A_t^N)\pip \frac 1 N \mathbb E\big[\sum_i (m_{i,t}^* - g_{i,t}) | \mathcal A_t^N\big] \\
			+ \Pr(\mathcal Z_t^N) \cdot 0 + \Pr (\mathcal B_t^N) \frac 1 N \mathbb E\big[\sum_i (M_{i,t}^* - g_{i,t}) | \mathcal B_t^N\big],
		\end{aligned}
		\label{eq:cost-decomposition}
	\end{equation}
	where the net-zero term vanishes because the community neither imports
	nor exports and no deadline is missed under the optimal policy
	(Assumption~A1). All conditional terms
	in~\eqref{eq:cost-decomposition} are bounded in absolute value by the
	integrable dominating variable of Step~0. Combining the zone
	limits~\eqref{eq:zone-limits} with dominated convergence,
	\begin{equation}
		\lim_{N\to\infty} \frac{1}{N} V^{N}(x_0)
		= \sum_{t=1}^{T} \pi^{-}
		\lim_{N\to\infty} \frac{1}{N}
		\mathbb{E}\Big[ \sum_{i} \big( M^{*}_{i,t} - g_{i,t} \big) \Big],
		\label{eq:opt-limit}
	\end{equation}
	where the conditioning on $B^{N}_t$ is removed in the limit since
	$\Pr(B^{N}_t) \to 1$ and the conditional and unconditional
	expectations coincide asymptotically.
	
	Under $\chi_{\text{TPR}}$, Proposition~\ref{prop1}
	prescribes the individual actions $m^{\text{TPR}}_{i,t}$,
	$\mu_{\mathrm{p}}(x_{i,t})$, and $M^{\text{TPR}}_{i,t}$ in the three
	zones, and the per-house stage cost under TPR admits the analogous
	decomposition over $A^{N,\text{TPR}}_t$, $Z^{N,\text{TPR}}_t$, and
	$B^{N,\text{TPR}}_t$, with the net-zero term
	$\frac{1}{N}\mathbb{E}\big[ P_{\pi}\big( \sum_i (\mu_{\mathrm{p}}(x_{i,t})
	- g_{i,t}) \big) \,\big|\, Z^{N,\text{TPR}}_t \big]$ in place of zero.
	This term is bounded by the same dominating variable, and
	$\Pr(Z^{N,\text{TPR}}_t) \to 0$ by~\eqref{eq:zone-limits}, so it
	vanishes in the limit; the net-consuming term vanishes identically.
	Dominated convergence yields
	\begin{equation}
		\lim_{N\to\infty} \frac{1}{N} V^{N,\chi_{\text{TPR}}}(x_0)
		= \sum_{t=1}^{T} \pi^{-}
		\lim_{N\to\infty} \frac{1}{N}
		\mathbb{E}\Big[ \sum_{i} \big( M^{\text{TPR}}_{i,t} - g_{i,t} \big) \Big].
		\label{eq:tpr-limit}
	\end{equation}
	
	\emph{Step 3 (Trajectory coupling).}
	The renewables terms in~\eqref{eq:opt-limit} and~\eqref{eq:tpr-limit}
	are identical under the coupling, so it remains to show
	\[
	\Delta^{N}_t := \frac{1}{N}
	\mathbb{E}\Big[ \sum_{i} \big( M^{*}_{i,t} - M^{\text{TPR}}_{i,t} \big)
	\Big] \longrightarrow 0,
	\qquad \forall t \in [T].
	\]
	Define the good event
	$G_t := \bigcap_{s \le t} \big( B^{N}_s \cap B^{N,\text{TPR}}_s
	\big)$. On $G_t$, Theorem~\ref{thrm1} and
	Proposition~\ref{prop1} prescribe the same action
	$c_{i,s} = \min\{d_{i,s}, \bar{c}\}$ for every household and every
	$s \le t$ under the two policies. Since the arrivals, durations, and
	initial demands coincide under the coupling, an interval-by-interval
	induction gives $d^{*}_{i,s} = d^{\text{TPR}}_{i,s}$ for all $i$ and
	$s \le t$ on $G_t$; in particular, $M^{*}_{i,t} =
	M^{\text{TPR}}_{i,t}$ on $G_t$. Using $|M^{*}_{i,t} -
	M^{\text{TPR}}_{i,t}| \le \bar{c}$ on $G^{c}_t$, a union bound over
	the finitely many $s \le t$ together with~\eqref{eq:zone-limits} gives
	\[
	\big| \Delta^{N}_t \big|
	\;\le\; \bar{c} \, \Pr\!\big( G^{c}_t \big)
	\;\le\; \bar{c} \sum_{s \le t}
	\Pr\!\Big( \big( B^{N}_s \cap B^{N,\text{TPR}}_s \big)^{c} \Big)
	\;\longrightarrow\; 0.
	\]
	Substituting into~\eqref{eq:opt-limit} and~\eqref{eq:tpr-limit},
	\[
	\lim_{N\to\infty} \frac{1}{N}
	\big( V^{N,\chi_{\text{TPR}}}(x_0) - V^{N}(x_0) \big)
	= \sum_{t=1}^{T} \pi^{-} \lim_{N\to\infty} \Delta^{N}_t
	= 0.
	\]
\end{proof}

\subsection{Proof of Proposition 4}
Proof of surplus individual rationality is parallel to the proof of Proposition~\ref{prop:IR}. Let $\pmb c_i = (c_{i, 1}, \ldots, c_{i,T})$ be member $i$'s optimal charging action under $\tilde \chi_{\text{TPR}}$, and $\pmb c_i' = (c_{i,1}', \ldots, c_{i,T}')$ be the optimal stand-alone charging actions under NEM. Likewise, let $\pmb p_{i}' = (p_{i,1}, \ldots, p_{i,T})$ and $\pmb p_i = (p_{i,1}, \ldots, p_{i,T})$ be member $i$'s optimal non-deferrable demand schedule under NEM and TPR, respectively.

We first show  
\begin{equation}\label{eq:proofthrm4eq1}
	\begin{aligned}
		&\sum_{t = 1}^T U_{i,t}(p_{i,t}') - P_{\pmb \pi}(p_{i,t}' + c_{i,t}' - r_{i,t}) \\
		\le &\sum_{t = 1}^T U_{i,t}(p_{i,t}') - P_{\pmb \psi_t}(p_{i,t}' + c_{i,t}' - r_{i,t}). 
	\end{aligned}
\end{equation}

For each interval $t$, there are three possible values of $\pmb \psi_t$ under $\tilde \chi_{\text{TPR}}$. 
\begin{enumerate}
	\item Case 1 $\pmb \psi_t = \pmb \pi^-$. If $c_{i,t}' = m_{i,t}$, member $i$'s non-deferrable demand schedule is either $p_{i,t}(\pmb \pi^+)$ or $p_{i,t}(\pmb \pi, r_{i,t})$. Member $i$'s surplus under NEM is 
	\[
	U_{i,t}(p_{i,t}(\pmb \pi^+)) - \pip (p_{i,t}(\pmb \pi^+)  + m_{i,t} - r_{i,t})
	\]
	or
	\[
	U_{i,t}(p_{i,t}(\pmb \pi, r_{i,t})) - \pip(m_{i,t}). 
	\]
	For the same deferrable and non-deferrable demand decisions, member $i$'s surplus under $\tilde \chi_{\text{TPR}}$ are
	\[
	U_{i,t}(p_{i,t}(\pmb \pi^+)) - \pim (p_{i,t}(\pmb \pi^+)  + m_{i,t} - r_{i,t}) 
	\]
	or
	\[
	U_{i,t}(p_{i,t}(\pmb \pi, r_{i,t})) - \pim(m_{i,t}),
	\]
	which are both no less than the surplus under NEM. \\
	If the non-deferrable decision is $p_{i,t}(\pmb \pi^-)$, possible deferrable demand decision $c_{i,t}'$ is $m_{i,t}$, $r_{i,t} - p_{i,t}(\pmb \pi^-)$, or $M_{i,t}$. For each case, member $i$'s surplus under NEM is 
	\[
	U_{i,t}(p_{i,t}(\pmb \pi^-)) - \pip (p_{i,t}(\pmb \pi^-) + m_{i,t} - r_{i,t}) 
	\]
	or 
	\[
	U_{i,t}(p_{i,t}(\pmb \pi^-))
	\]
	or
	\[
	U_{i,t}(p_{i,t}(\pmb \pi^-)) - \pim (p_{i,t}(\pmb \pi^-) + M_{i,t} - r_{i,t}) .
	\]
	For the same deferrable and non-deferrable demand decisions, price under $\tilde \chi_{\text{TPR}}$ is no greater than that under NEM. Hence, member $i$'s surplus is no less than that under NEM. 
	
	\item Case 2 $\pmb \psi_t = \pmb \pi^+$. As in Case 1, if $c_{i,t}' = m_{i,t}$, member $i$'s non-deferrable schedule is either $p_{i,t}(\pmb \pi^+)$ or $p_{i,t}(\pmb \pi, r_{i,t})$. The payment of member $i$ under TPR and NEM is identical, as the price is $\pip$ for both cases. Hence, member $i$'s surplus does not change. \\
	
	If the non-deferrable demand schedule is $p_{i,t}(\pmb \pi^-)$, member $i$'s deferrable demand schedule $c_{i,t}'$ is $m_{i,t}$, $r_{i,t} - p_{i,t}(\pmb \pi^-)$, or $M_{i,t}$. For $c_{i,t}' = m_{i,t}$ and $r_{i,t} - p_{i,t}(\pmb \pi^-)$, member $i$'s surplus is identical as the payment under TPR and $\tilde \chi_{\text{TPR}}$ are the same. For $c_{i,t}'  = M_{i,t}$, member $i$'s is selling but at the higer price of $\pip$ under $\tilde \chi_{\text{TPR}}$. Thererfore, member $i$'s surplus is better off under $\tilde \chi_{\text{TPR}}$. 
	
	\item Case 3 $\pmb \psi_t = \pmb \pi$. Since member $i$'s payment is identical, member $i$'s surplus is identical. 
\end{enumerate}
In summary, for $\pmb p_i'$ and $\pmb c_i'$, $\tilde \chi_{\text{TPR}}$ always has price no worse off than the outside price and (\ref{eq:proofthrm4eq1}) holds.

Now, we prove the following inequality
\begin{equation}\label{eq:proofthrm4eq2}
	\begin{aligned}
		&\sum_{t = 1}^T U_{i,t}(p_{i,t}') - P_{\pmb \psi_t}(p_{i,t}' + c_{i,t}' - r_{i,t}) \\
		\le &\sum_{t = 1}^T U_{i,t}(p_{i,t}) - P_{\pmb \psi_t}(p_{i,t} + c_{i,t} - r_{i,t}).
	\end{aligned}
\end{equation}
For intervals with $\pmb \psi_t = \pmb \pi$, $p_{i,t} = p_{i,t}'$ as the optimal non-deferrable demand decision under $\tilde \chi_{\text{TPR}}$ matches the optimal decision under NEM. Since the net renewables for both prices are identical, $c_{i,t}= c_{i,t}'$. Hence, we only need to compare intervals with $\pmb \psi_t = \pmb \pi^+$ and $\pmb \pi^-$. 

For intervals with $\pmb \psi_t = \pmb \pi^{\pm}$ the payment function is linear,
\[
P_{\pmb \psi_t}(p_{i,t} + c_{i,t} - r_{i,t}) =\psi_t (p_{i,t} + c_{i,t} - r_{i,t}),
\] 
with $\psi_t = \pi^+$ or $\pim$. As $p_{i,t}$ is the optimal solution of (17), 
\[
U_{i,t}(p_{i,t}') - \psi_t (p_{i,t}') \le U_{i,t}(p_{i,t}) - \psi_t  (p_{i,t}). 
\]
For the charging actions $\{c_{i,t}'\}$ and $\{c_{i,t}\}$, the interchange argument in the Proposition~\ref{prop1} holds path-wise and its feasibility set, which implies 
\[
\sum_{t \in L}  \psi_t c_{i,t}' \ge \sum_{t \in L} \psi_t c_{i,t},
\] 
where $L = \{t \in [T] \; | \; \pmb \psi_t = \pmb \pi^+ \text{ or } \pmb \psi_t = \pmb \pi^-\}$. Combining two inequality, (\ref{eq:proofthrm4eq2}) holds. Therefore, $\tilde \chi_{\text{TPR}}$ satisfies surplus individual rationality. 

Proof for the revenue adequacy is parallel to the proof of Proposition~\ref{prop:RA} in the main text.

%% file: sections/appendixB_nondeferrable_load.tex
\subsection{Price-inelastic non-deferrable demand model}
	HVAC operation of member $i$ is modeled as a non-deferrable demand of the household, which is price-inleastic and controlled by the linear quadratic controller operating to track the reference temperature $\theta_{i,t}$. Specifically, following empirically verified linear dynamics of HVAC temperature control in \cite{bargiotas1988residential}
\begin{equation*}
	\begin{aligned}
		\min_{\{p_{i,t}\}_{t=1}^T} & \quad \mathbb E\bigg[ \sum_{t=1}^T (y_{i,t} - \theta_{i,t})^2\bigg] \\
		\text{s.t.} & \quad y_{i,t} = (1 - \gamma)y_{i,t-1} + \gamma u_{i,t} - \beta p_{i,t} + \omega _{i,t}, \quad \forall t,
	\end{aligned}
\end{equation*}
where $y_{i,t}$ and $u_{i,t}$ are indoor and outdoor temperature, respectively, $p_{i,t}$ is the energy consumption of HVAC and $\omega_{i,t}$ is the zero mean Gaussian disturbance. Here, indoor temperature is a state variable and we assume that it is perfectly measured. Two parameters $\gamma \in (0, 1)$ and $\beta$ are insulation coefficients and efficiency of the HVAC system, respectively. The optimal HVAC operation $p_{i,t}^*$ is an affine function of $y_{i,t-1}$:
\[
p_{i,t}^* = \frac{1}{\beta} \big(y_{i,t-1} + \gamma(\mathbb E[u_{i,t}] - y_{i,t-1}) - \theta_{i,t} \big),
\]
which can be derived by solving dynamic programming equation. 

Under assumptions that two random processes $u_{i,t}$ and $\omega_{i,t}$ are independent and are both time independent random variables, and $u_{i,t}$ are mutually independent across members, the optimal HVAC operation $\{p_{i,t}^*\}_{t=1}^T$ is also a time independent random process that is mutually independent across members.

\subsection{Asymptotic optimality of $\tilde \chi_{\text{TPR}}$}
We adopt the asymptotic regime of Sec.~IV-E: arrivals are homogeneous Bernoulli with probability $\alpha$ and maximum charging duration $\bar{T}$. Households are homogeneous in their preferences, $U_{i,t}=U$ for all $i$ and $t$, so that the price-elastic consumption levels $p(\pi^+)=\partial U^{-1}(\pi^+)$ and $p(\pi^-)=\partial U^{-1}(\pi^-)$ are deterministic constants common to all members. The renewables $\{r_{i,t}\}$ are i.i.d. across members and stationary in time with $\mathbb E[r_{i,t}]<\infty$. Let
\begin{equation}
	\theta_g(\pi^-) := \mathbb E\big[r_{i,t}\big]-p(\pi^-).
	\label{eq:thetag-elastic}
\end{equation}
Note that $\theta_g(\pi^-)$ is the mean surplus renewable for the non-deferrable demand consumption for price $\pim$, hence the most conservative measure of per-household surplus generation.

Let $W^N(\bm{x}_0)$ denote the optimal centralized welfare, in which
the coordinator jointly schedules $\{c_{i,t},p_{i,t}\}$ to maximize
the expected community surplus
\begin{equation}
	\mathbb E\Bigg[\sum_{t=1}^{T}\bigg(\sum_{i\in[N]} U_{i,t}(p_{i,t})
	- P_{\bm{\pi}}(z_t)
	- \sum_{j\in\mathcal{J}_t} q(d_{j,t}-c_{j,t})\bigg)\Bigg],
	\label{eq:welfare-obj}
\end{equation}
subject to (1)--(4), and let $W^{N,\tilde \chi_{\text{TPR}}}(\bm{x}_0)$ denote the expected community surplus under $\tilde \chi_{\text{TPR}}$ with the individually optimal responses in (\ref{eq:elastic-nd-threshold}) and Proposition~\ref{prop1}.

\begin{theorem}[Asymptotic optimality of $\tilde \chi_{\text{TPR}}$]
	\label{thm:elastic-asymptotic}
	Under the above regime, if $\alpha \le \theta_g(\pi^-)/(\bar{c}\bar{T})$,
	then
	\[
	\lim_{N\to\infty}\frac{1}{N}
	\Big(W^{N}(\bm{x}_0)-W^{N,\tilde \chi_{\text{TPR}}}(\bm{x}_0)\Big)=0.
	\]
\end{theorem}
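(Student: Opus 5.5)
We will follow the structure of the proof of Theorem~\ref{thrm:asymptoticTPR}, with welfare in place of cost and the price-elastic net renewables $g_{i,t}(\pmb\psi)=r_{i,t}-p(\pmb\psi)$ in place of $g_{i,t}$. All quantities are coupled on a common realization of the arrival coins, the $(D_i,T_i)$ draws and the renewables $\{r_{i,t}\}$. As before, occupancy $s_{i,t}$ and the count $I_N(t)$ are exogenous and policy-independent, with $\beta_t\le\alpha\bar T$. The bound (\ref{eq:threshold-occupancy-bound}) also holds pathwise for both trajectories.

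\textbf{Step 1: the net-producing zone of $\tilde\chi_{\text{TPR}}$ has probability tending to one.} Since $p(\pi^-)$ is a deterministic constant,
\[
\tfrac1N\sum_i g_{i,t}(\pmb\pi^-)\to\theta_g(\pi^-)\quad\text{a.s.}
\]
by the strong law. Then
\[
\tfrac{\bar c}{N}I_N(t)-\tfrac1N\sum_i g_{i,t}(\pmb\pi^-)\to\bar c\beta_t-\theta_g(\pi^-).
\]
Under the strict light-traffic inequality this limit is negative. If only $\le$ is assumed, I would need $\beta_t<\alpha\bar T$, which follows from $1-(1-\alpha)^{\bar T}<\alpha\bar T$ for $\bar T\ge2$; the case $\bar T=1$ would have to be handled separately or by assuming strictness. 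Hence $\Pr(\tilde\chi_{\text{TPR}}=\pmb\pi^-)\to1$ for every $t$. On this event, Proposition~\ref{prop1} and (\ref{eq:elastic-nd}) give $c_{i,t}=M_{i,t}$ and $p_{i,t}=p(\pi^-)$.

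\textbf{Step 2: upper bound on $W^N$ and its asymptotics.} This is the main obstacle, because Theorem~\ref{thrm1} does not directly cover the joint choice of $(c,p)$. My plan is a pathwise relaxation argument. For any feasible centralized decision, split the per-interval welfare as
\[
\sum_i U(p_{i,t})-P_{\pmb\pi}(z_t)\le \sum_i\big[U(p_{i,t})-\pi^-p_{i,t}\big]-\pi^-\sum_i(c_{i,t}-r_{i,t}),
\]
which uses $P_{\pmb\pi}(z)\ge\pi^- z$. The first bracket is maximized at $p(\pi^-)$ by concavity. Completion under Lemma~\ref{lem:completion}, extended with the same argument since $q'(0)>\pi^+$, makes $\sum_t c_{i,t}$ equal to total delivered demand, which is policy-independent and equals the TPR value on the event that TPR completes all jobs. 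This yields
\[
\tfrac1N W^N\le \tfrac1N\sum_t\mathbb E\Big[\sum_i\big(U(p(\pi^-))-\pi^-(c_{i,t}-r_{i,t})\big)\Big],
\]
with $\sum_t\sum_i c_{i,t}$ fixed by the arrivals. Some care is needed for jobs whose deadlines extend past $T$. I would handle these by the same coupling used in Step 3 of the proof of Theorem~\ref{thrm:asymptoticTPR}, where charging trajectories coincide on the good event, rather than by exact telescoping.

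\textbf{Step 3: matching lower bound for $\tilde\chi_{\text{TPR}}$.} On the good event $G_t=\bigcap_{s\le t}\{\pmb\psi_s=\pmb\pi^-\}$, TPR's stage welfare equals exactly
\[
\sum_i\big[U(p(\pi^-))-\pi^-(M_{i,t}+p(\pi^-)-r_{i,t})\big],
\]
because the community is net-producing and pays $\pi^-$ on the aggregate. This is the bound of Step 2 evaluated at TPR's own actions. The centralized and TPR upper-bound expressions then differ only through the charging trajectories. Those coincide on $G_t$ with the greedy $c=\min\{d,\bar c\}$ schedule, by the induction in Step 3 of the proof of Theorem~\ref{thrm:asymptoticTPR}, and differ by at most $\bar c$ per house otherwise.

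\textbf{Step 4: passing to the limit.} Per-house stage welfares are dominated uniformly in $N$ by an integrable variable. This uses $|U(p)|$ on the bounded range $[p(\pi^+),p(\pi^-)]$, $\pi^+(\bar c+p(\pi^-)+\bar r^{(N)}_t)$, and bounded penalties, as in Step 0. A union bound gives $\Pr(G_t^c)\to0$. Dominated convergence then gives
\[
0\le\tfrac1N\big(W^N-W^{N,\tilde\chi_{\text{TPR}}}\big)\to0.
\]
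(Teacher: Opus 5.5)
Your core argument is the paper's. You bound $W^N$ by replacing $P_{\bm\pi}(z)$ with $\pi^- z$, maximize $U(p)-\pi^- p$ pointwise at $p(\pi^-)$, and use completion to pin down the charging payment. You then show that $\tilde\chi_{\text{TPR}}$ attains this bound whenever it broadcasts $\bm\pi^-$, an event whose probability tends to one.

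The step that does not go through as written is the one you choose \emph{instead of} exact telescoping. In Steps~2--3 you compare the centralized and TPR expressions by asserting that the centralized welfare-optimal charging trajectory coincides with the greedy schedule $c=\min\{d,\bar c\}$ on $G_t$. In the proof of Theorem~\ref{thrm:asymptoticTPR} that fact came from Theorem~\ref{thrm1}, applied on the centralized trajectory's own net-producing events $B^N_s$. Here, as you note yourself, there is no analogue of Theorem~\ref{thrm1} for the joint choice of $(c,p)$. Moreover, $G_t$ is defined only through TPR's broadcast price, so nothing you have forces the centralized optimum to charge greedily on it.

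No coupling is needed, and the concern that motivated it does not arise. Arrivals satisfy $T_i\le T-t+1$, so every deadline lies inside the horizon. For any completing policy, $\sum_t\sum_i c_{i,t}$ therefore equals the exogenous total demand $D^N$ (initial plus arriving) on every path. Your bound then becomes policy-free, $\sum_t\sum_i\big[U(p(\pi^-))-\pi^- p(\pi^-)+\pi^- r_{i,t}\big]-\pi^- D^N$, whose expectation is exactly the paper's relaxed value $\bar W^N$. Note that your displayed Step~2 bound drops the $-\pi^- p(\pi^-)$ term, which Step~3 needs.

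The responses in Proposition~\ref{prop1} always charge at least $m_{i,t}$, so TPR also completes on every path. Its charging payment therefore matches the bound everywhere, not just on $G_t$. The gap comes only from stages with $\bm\psi_t\neq\bm\pi^-$. These vanish because $\Pr(\bm\psi_t\neq\bm\pi^-)\to 0$ and $\frac1N\sum_i r_{i,t}$ is uniformly integrable. Strictly, what you have is uniform integrability rather than a single dominating variable uniform in $N$, and uniform integrability is what suffices.

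Your remark on strictness is a valid refinement of the paper, which asserts $\bar c\beta_t-\theta_g(\pi^-)<0$ under a non-strict hypothesis. That strict inequality holds automatically when $\bar T\ge 2$ and $\alpha>0$; otherwise it must be assumed.
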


\begin{proof}
	All policies are coupled to the same realization of arrivals, duration--demand draws, renewables, and utilities, as in the proof of	Theorem~\ref{thrm:asymptoticTPR}. By Lemma~1, whose slope hypothesis covers $P_{\bm{\pi}}$ and every $P_{\bm{\psi}_t}$ under $\tilde \chi_{\text{TPR}}$, we restrict attention to completing policies.
	
	Under any completing policy, the per-house stage surplus is bounded in absolute value by 
	\begin{equation}
		U\big(p(\pi^-)\big)
		+\pi^+\big(\bar{c}+p(\pi^-)+r_{i,t}\big),
		\label{eq:domination-elastic}
	\end{equation}
	where the first term is finite constant and the last has finite mean $\mathbb E[r_{i,t}]<\infty$. Averages over $i$ of \eqref{eq:domination-elastic} are thus uniformly integrable, justifying the interchanges of limits below.
	
	Since $P_{\bm{\pi}}(z)\ge \pi^- z$ for all $z\in\mathbb{R}$, replacing $P_{\bm{\pi}}$ by the linear payment $\pi^- z_t$ in \eqref{eq:welfare-obj} yields an upper bound $\bar{W}^N \ge W^N$. Under the linear price the relaxed problem decouples across households and intervals: each household consumes $p(\pi^-)$, and, by completion, each charging job pays exactly $\pi^- D_i$ regardless of its schedule. Hence $\bar{W}^N$ admits a	closed form free of any scheduling optimization:
	\begin{equation}
		\begin{aligned}
		\frac{1}{N}\bar{W}^{N}
		=&\sum_{t=1}^{T}
		\Big(U\big(p(\pi^-)\big)
		-\pi^-\big(p(\pi^-)-\mathbb E[r_{i,t}]\big)\Big) \\
		&-\frac{\pi^-}{N}\,\mathbb E\big[D^N\big],
		\label{eq:relaxed-value}
		\end{aligned}
	\end{equation}
	where $D^N$ is the total charging demand arriving over the horizon.
	
	The occupancy indicators $s_{i,t}$ and the count $I_N(t)$ are exogenous and identical across policies, and (\ref{eq:occupancy-bound})--(\ref{eq:threshold-occupancy-bound}) hold likewise. Since $p(\pi^-)$ is a common constant, the reported surplus levels	$\{g_{i,t}(\pi^-)=r_{i,t}-p(\pi^-)\}_{i\in[N]}$ are i.i.d. with mean $\theta_g(\pi^-)$. The strong law of large numbers together with (\ref{eq:occupancy-bound}) and the light-traffic condition $\alpha\le\theta_g(\pi^-)/(\bar{c}\bar{T})$ gives, for all $t\in[T]$,
	\[
	\frac{\bar{c}}{N}I_N(t)-\frac{1}{N}\sum_{i\in[N]}g_{i,t}(\pi^-)
	\xrightarrow{\text{a.s.}}
	\bar{c}\,\beta_t-\theta_g(\pi^-)<0.
	\]
	Hence, by (\ref{eq:elastic-tpr}) and (\ref{eq:threshold-occupancy-bound}),
	$\Pr\big(\tilde{B}^N_t\big)\to 1$, where
	\[
	\tilde{B}^N_t:=
	\Big\{\textstyle\sum_{i\in[N]}g_{i,t}(\pi^-)>M_t(\bm{d}_t)\Big\}
	\]
	is the event that $\tilde \chi_{\text{TPR}}$ broadcasts $\bm{\pi}^-$ at $t$.
	
	On event $G:=\bigcap_{t\le T}\tilde{B}^N_t$, the community price is $\bm{\pi}^-$ in every interval; each household consumes 	$p(\pi^-)$, charges $M_{i,t}(d_{i,t})$, completes every job, and every payment is linear at rate $\pi^-$. Hence, on event $G$, the realized community surplus under $\tilde \chi_{\text{TPR}}$ equals the realized optimal value of the relaxation path-wise. On $G^c$, the surplus gap is bounded by (\ref{eq:domination-elastic}), and
	\[
	\Pr(G^c)\le\sum_{t\le T}\Pr\big((\tilde{B}^N_t)^c\big)\to 0.
	\]
	Dominated convergence then yields
	\[
	\lim_{N\to\infty}\frac{1}{N}W^{N,\tilde \chi_{\text{TPR}}}
	=\lim_{N\to\infty}\frac{1}{N}\bar{W}^{N}.
	\]
	Combining with
	$W^{N,\tilde \chi_{\text{TPR}}}\le W^{N}\le \bar{W}^{N}$ completes the proof.
\end{proof}